\newtheorem{lemma}{Lemma}
\newtheorem{theorem}{Theorem}
\newtheorem{problem}{Problem}
\begin{document}
\title{Randomisation Algorithms for Large Sparse Matrices}

\author{Kai Puolam\"aki}
\email{kai.puolamaki@helsinki.fi}

\affiliation{Aalto University, Helsinki, Finland}
\affiliation{Department of Computer Science, University of Helsinki, Finland}

\author{Andreas Henelius}
\email{andreas.henelius@helsinki.fi}

\affiliation{Finnish Institute of Occupational Health, Helsinki, Finland}
\affiliation{Aalto University, Helsinki, Finland}
\affiliation{Department of Computer Science, University of Helsinki, Finland}

\author{Antti Ukkonen}
\email{antti.ukkonen@helsinki.fi}

\affiliation{Department of Computer Science, University of Helsinki, Finland}

\date{14 November 2018}
\begin{abstract}
In many domains it is necessary to generate surrogate networks, e.g.,
for hypothesis testing of different properties of a
network. Furthermore, generating surrogate networks typically requires
that different properties of the network is preserved, e.g., edges may
not be added or deleted and the edge weights may be restricted to
certain intervals. In this paper we introduce a novel efficient
property-preserving Markov Chain Monte Carlo method termed
CycleSampler for generating surrogate networks in which (i) edge
weights are constrained to an interval and node weights are preserved
exactly, and (ii) edge and node weights are both constrained to
intervals. These two types of constraints cover a wide variety of
practical use-cases. The method is applicable to both undirected and
directed graphs. We empirically demonstrate the efficiency of the
CycleSampler method on real-world datasets. We provide an
implementation of CycleSampler in R, with parts implemented in C.
\end{abstract}
\maketitle

\section{Introduction}
In many applications it is useful to represent relationships between
objects with a network in which vertices correspond to objects of
interest and associations between objects are expressed with directed
or undirected edges. The edges can also be weighted. Given such a
network, one might be interested in questions such as community
detection \cite{fortunato:2010:a}, clustering coefficients
\cite{ansmann2011constrained, hao:2012:a}, centrality measures
\cite{joyce:2010:a}, shortest path distributions
\cite{BackstromBRUV12}, or different measures of information
propagation \cite{chen:2013:a}. However, it is often useful to study
whether a possibly interesting finding from a given network reflects a
real phenomenon, or if it is merely caused by noise. A simple approach
to this is to compare the original finding to findings from {\em
  surrogate networks} that share some {\em relevant properties} with
the original network, but are otherwise inherently ``random''.  For
example, communities found in the original network should probably
exhibit greater structure than communities in appropriately randomised
networks. Usual solutions thus involve generating a number of
surrogate networks by fixing some network properties of interest, and
then drawing a uniform sample from the set of all networks satisfying
the given properties.

Existing methods for sampling networks can be assigned into two
categories: {\em property-preserving} and {\em structure-preserving}
methods.  Property-preserving methods \cite{coolen2009constrained,
  Hanhijarvi2009graph, ansmann2011constrained, roberts2012unbiased} do
not preserve network topology, i.e., they can introduce new edges and
remove existing ones.  These approaches can be viewed as always
considering a fully connected clique, inside which the edge weights
are rearranged.  Their aim is to preserve some network property
of interest, such as node degrees or node weights.
Property-preserving methods can be further divided into those
preserving the property {\em exactly} or {\em in expectation}.  For
example, preserving node degrees exactly is relatively straightforward
using, e.g., edge swaps \cite{Gionis2007,ansmann2011constrained}.
Preserving higher-order statistics is often possible only in
expectation \cite{Hanhijarvi2009graph}. That is, the expected value of
the property remains equal to some given constraint, but its observed
value in an individual sample may deviate from this constraint.

Structure-preserving methods \cite{squartini2011analytical}, on the
other hand, keep the network topology fixed (new edges are not
inserted and existing ones are not removed), but usually maintain the
desired property, e.g., node weights, only in expectation. Such
approaches are usually based on {\em maximum entropy} models where
surrogate networks are sampled simply by drawing edge weights from a
parametrized i.i.d.~distribution. While these methods are often
computationally quite efficient, maintaining the desired property only
in expectation may not be enough. It is, e.g., conceivable that
without additional constraints the network property of interest may
occasionally take values that cannot be observed in real networks, and
in these cases the sampled node weights may hence be unsuitable for
the task of comparing an original finding to ``random findings''.

As a toy example, consider the network shown in
Fig.~\ref{fig:example1:simple}, with six nodes and seven edges. This
network describes telephone calls between six individuals (the nodes)
over an observation period of 24 hours. An edge between two nodes
represents the total cumulative call duration (in hours) between two
individuals. Our goal is now to sample networks having exactly the
same edges as in the original network, i.e., we assume that people
cannot communicate outside their friendship network and hence no new
edges are created. We consider two different cases. First, we consider
the case where the call durations between two individuals (the edge
weights) can vary within an interval, but the node weights (sum of
edge weights adjacent to a node) must stay fixed at their original
values.  That is, the total duration spent on the phone by every
individual must remain the same. Second, we consider the case where
both edge and node weights are allowed to vary within a given
interval.  In both cases the interval width is constrained by the
simple fact that during a 24-hour period a person cannot spend more
than 24 hours on the phone in total.

Table~\ref{tab:example1:weights} shows the range (min and max) of the
edge and node weights for 10 000 samples for the two cases described
above, obtained using (i) the method presented in this paper, termed
\emph{CycleSampler}, and (ii) a maximum entropy model. In Case 1, the
CycleSampler method preserves the node weights exactly.  (Notice that
the range of node weights matches the node weights given in
Fig.~\ref{fig:example1:simple}.)  In Case 2 the CycleSampler method
simultaneously preserves both edge and node weights within the
interval 0--24 hours, while the maximum entropy method preserves these
only in expectation. It is clear that the maximum entropy model easily
satisfies constraints on edge weights, but violates the 24-hour node
weight constraint.  This is because the edge weights are sampled
i.i.d.~and thus for nodes having a large degree the sum of sampled
weights on adjacent edges can easily increase beyond the maximum value
allowed.

\begin{figure}
  \centering
  \begin{minipage}[t]{0.48\textwidth}
    \includegraphics[width=\textwidth]{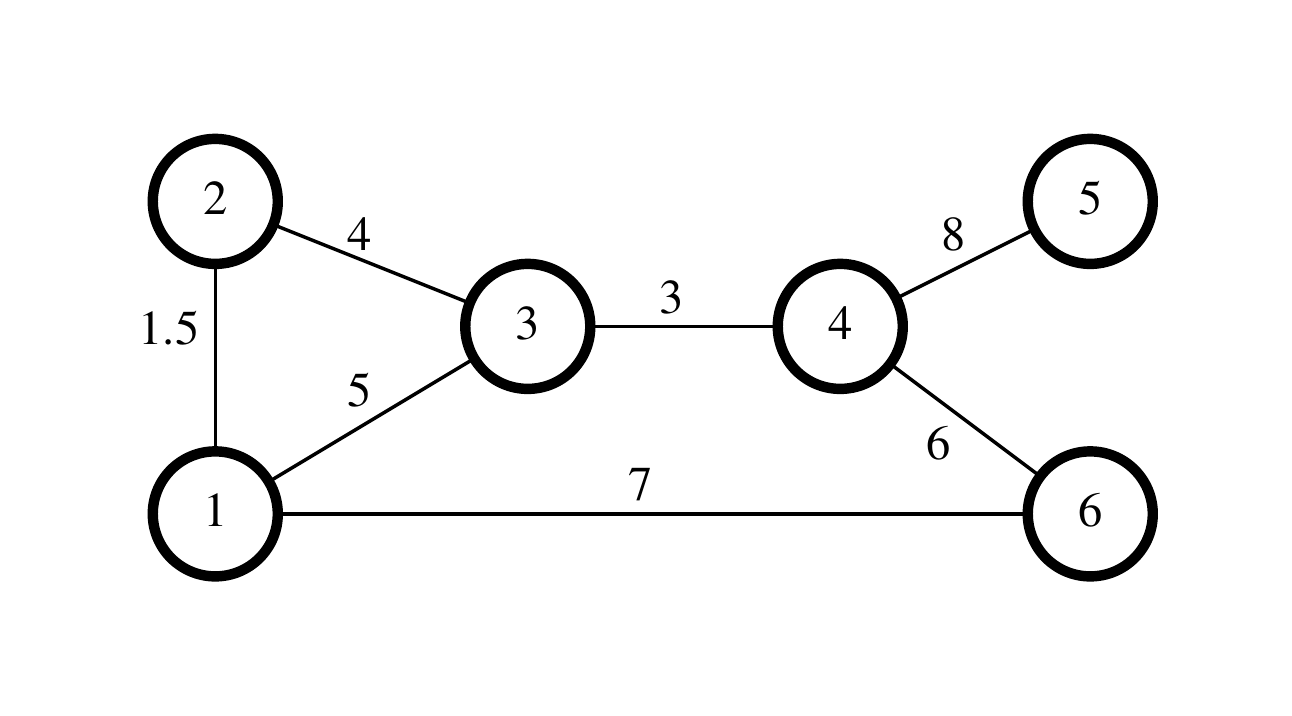}
  \end{minipage}
  \begin{minipage}[t]{0.48\textwidth}
    \vspace*{-10em}
    \setlength{\tabcolsep}{1.7ex}
    \begin{tabular}{rcccccc}
      \multicolumn{7}{c}{Node weights} \\
      \textbf{Node}   & 1    & 2   & 3    & 4    & 5   & 6 \\
      \textbf{Weight} & 13.5 & 5.5 & 12.0 & 17.0 & 8.0 & 13.0
    \end{tabular}
  \end{minipage}
  \caption{Network where persons are represented as nodes and phone
    calls between persons as edges, with the edge weight denoting
    the total call duration in hours between two persons. The table
    on the right shows the node weights (sum of adjacent edge weights)
    for each node of the network.}
  \label{fig:example1:simple}
\end{figure}

\begin{table}
  \centering
  \caption{Range of edge and node weights of 10 000 surrogate networks
    sampled using the Uniform sampling method presented in this paper
    and the maximum entropy method. In Case 1 node weights are fixed,
    while in Case 2 edge and node weights can vary. Note that edge and
    node weights stay within the physically feasible interval [0, 24]
    hours for the uniform sampling method, whereas the maximum entropy
    solution can lead to unfeasible node weights.}
  \label{tab:example1:weights}
  \begin{ruledtabular}
    \begin{tabular}{r cccc}
      & \multicolumn{2}{c}{\textbf{Case 1}} & \multicolumn{2}{c}{\textbf{Case 2}} \\
      \cline{2-3} \cline{4-5}
      \textbf{Method} & \textbf{edge weights} & \textbf{node weights} & \textbf{edge weights} & \textbf{node weights}\\
      \hline
      CycleSampler    & [0.00, 12.00] & [5.50, 17.00] & [0.00, 22.28] & [0.06, 24.00] \\ 
      Maximum Entropy & & & [0.00, 24.00] & [0.00, 54.90  \\ 
    \end{tabular}
  \end{ruledtabular}
\end{table}

This limitation of the maximum entropy model is further highlighted in
the numerical example presented in Figure~\ref{fig:toy3}. This example
shows that even in a very simple case the majority of samples from a
maximum entropy model will {\em not} satisfy hard constraints on node
weights, even if the expected value of node weights is preserved.
Furthermore, the resulting distribution of edge weights is not
uniform.  The CycleSampler method proposed in this paper, on the other
hand, produces a uniform sample that satisfies all constraints.

\begin{figure}[ht!]
  \begin{center}
    \begin{tabular}[b]{c}
      \includegraphics[width=0.15\textwidth]{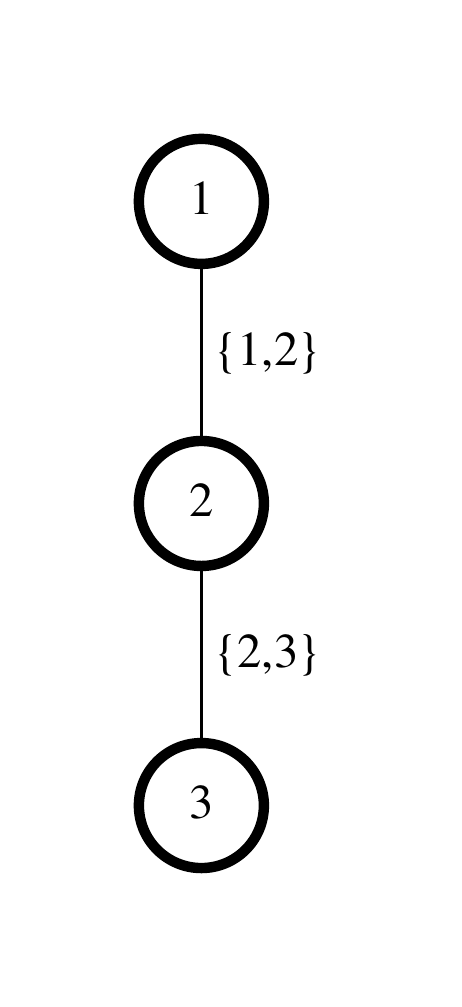}\\
      \small (a)
    \end{tabular}
    \begin{tabular}[b]{c}
      \includegraphics[width=0.39\textwidth]{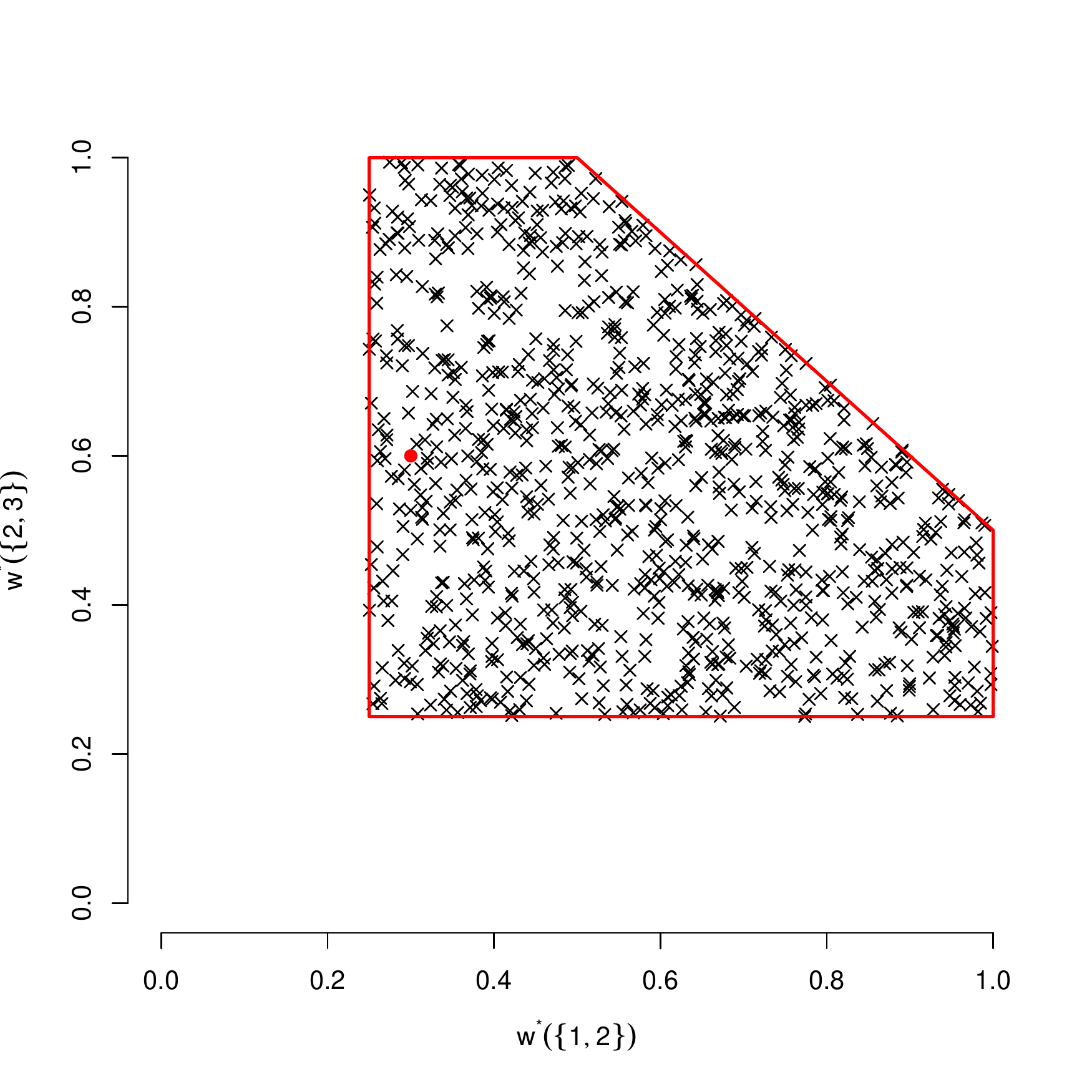}\\
      \small (b)
    \end{tabular}
    \begin{tabular}[b]{c}
      \includegraphics[width=0.39\textwidth]{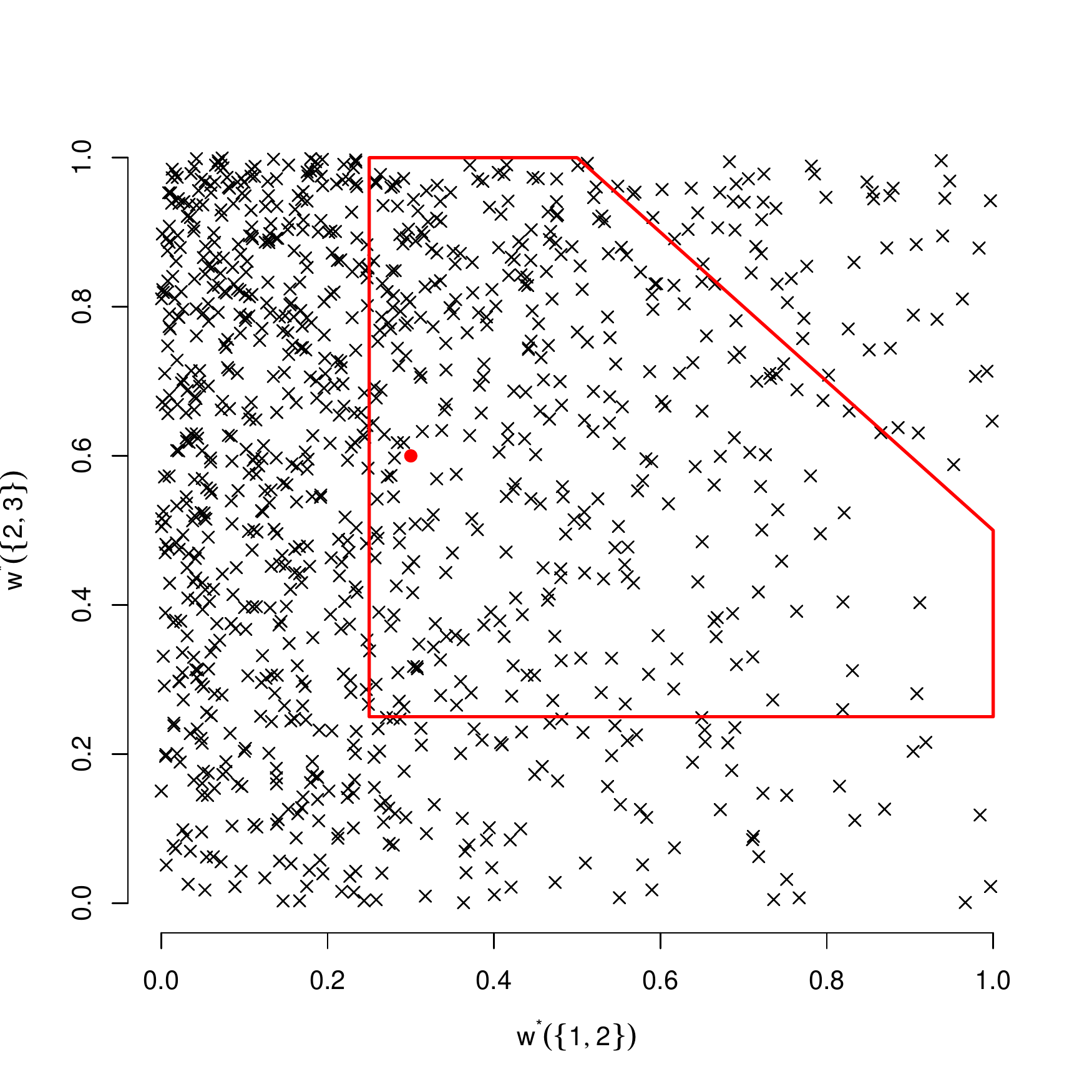}\\
      \small (c)
    \end{tabular}
    \caption{An example illustrating the difference
      between a uniform distribution of weights (the CycleSampler
      method) and the maximum entropy model.  (a) A graph with three
      vertices $\{1,2,3\}$ and two edges with observed weights
      $w(\{1,2\})=0.3$ and $w(\{2,3\})=0.6$, respectively. The node
      weights are $W(1)=w(\{1,2\})=0.3$,
      $W(2)=w(\{1,2\})+w(\{2,3\})=0.9$, and $W(3)=w(\{2,3\})=0.6$. We
      further assume that edge weights are constrained to
      $w^*(e)\in[0,1]$ and node weights to $W^*(v)\in[0.25,1.5]$.  (b)
      Edge weights $w^*$ sampled from the uniform model introduced in
      this paper. The area bounded by the red polygon encloses the set
      of feasible edge weights subject to the node weight
      constraints. The distribution of sampled edge weights is
      uniform, and both edge and node weights satisfy the
      constraints. The red dot shows the observed (original)
      weights. Note that while we use the observed weights as the
      initial point of our Markov chain, the resulting distribution is
      asymptotically identical for all choices of the initial point as
      long as it satisfies the constraints.  (c) Edge weights $w^*$
      from the maximum entropy distribution over edge weights
      $p(w(\{1,2\}),w(\{2,3\}))$ such that the expected node weights
      match the observed ones, i.e., $E_p[W^*(1)]=0.3$,
      $E_p[W^*(2)]=0.9$, and $E_p[W^*(3)]=0.6$. In the maximum entropy
      model there is no obvious way to restrict both edge and node
      weights to a strict interval simultaneously, and most of the
      sampled weights are indeed outside the area bounded by the red
      polygon. Furthermore, the distribution of weights is generally
      not uniform.}
    \label{fig:toy3}
  \end{center}
\end{figure}

\textbf{Summary of contributions.}  In this paper we present the
\emph{CycleSampler} method, which is a {\em structure preserving
  sampling method for edge weights that explicitly maintains node
  weights within a given interval}. This interval can be set to have
zero width, in which case the node weights are maintained exactly in
the sampled networks.  The approach can be viewed as a generalisation
of the property-preserving MCMC algorithm described in
\cite{ansmann2011constrained} to the structure-preserving case.
However, the requirement to not introduce new edges or remove existing
ones presents some nontrivial algorithmic challenges.

In short, our approach samples uniformly from the {\em null space} of
the given network's {\em incidence matrix} (a binary matrix where
vertices are rows and edges are columns), which requires constructing
a basis for this null space.  It is crucial that the basis is {\em
  sparse}, since the null space may be very high-dimensional (in the
millions), and keeping a dense basis in memory, as found by textbook
methods, is infeasible for larger networks.  The problem of finding a
sparse basis for general matrices has been studied previously
\cite{gilbert1987computing,coleman1987null}, and some variants of it
are NP-hard \cite{coleman1986null}.  However, a sparse basis for the
null space of an incidence matrix can be constructed very efficiently
using a spanning tree of the original network \cite{AKBARI2006617}.
This results in an efficient and scalable algorithm that can generate
surrogate networks having millions of edges, while preserving node
weights as described.  We also present an empirical evaluation of the
scalability of our algorithm in a number of real-world cases. We
provide an open source implementation of the CycleSampler method in R,
with time-critical parts implemented in C.

\section{The CycleSampler algorithm}
\label{sec:methods}
\subsection{Problem definition}
\label{sec:definition}
Let $G=(V,E)$ be a graph, where the $m$ vertices are given by $V=[m]$,
where we denote $[m]=\{1,\ldots,m\}$, and the edges by $E\subseteq
\cup_{v\in V}{\cup_{v'\in
    V}{\left\{\left\{v,v'\right\}\right\}}}$. The weight of an edge
$e\in E$ is denoted $w(e)\in{\mathbb{R}}$. We assume that there are no
self-loops in the observed graph, i.e., $|e|=2$ for all $e\in E$.

We use the neighbourhood function $n(v)$, where $v\in V$, to represent
the set of edges connected to a vertex $v$, i.e.,
\begin{equation}\label{eq:nv}
  n(v)=\left\{e\in E\mid v\in e\right\}.
\end{equation}
We define the {\em weight of a vertex} $v\in V$ as the sum of the
weights of the edges connected to it, i.e.,
\begin{equation}\label{eq:W}
  W(v)=\sum_{e\in n(v)}{w(e)}.
\end{equation}
In colloquial terms, our task is to obtain a uniform sample of edge
weights $w(e)$, such that the weights of every edge and every vertex
remain within a given interval.  This problem can be formally defined
as follows.
\begin{problem}\label{pro:sampling}
  Given a connected graph $G=(V,E)$ and a set of intervals
  $[a(e),b(e)]$ for each edge $e\in E$ and $[A(v),B(v)]$ for each
  vertex $v\in V$, respectively, such that $a(e)\le w(e)\le b(e)$ and
  $A(v)\le W(v)\le B(v)$, obtain a sample uniformly at random from the
  set of allowed edge weights ${\cal W}^*$, given by
  \begin{equation}\label{eq:allowed}
    {\cal W}^*=\left\{
    w^*:E\mapsto{\mathbb{R}}
    \mid
    \left(
    \forall e\in E\ldotp
    w^*(e)\in[a(e),b(e)]
    \right)
    \wedge
    \left(
    \forall v\in V\ldotp
    \sum_{e\in n(v)}{w^*(e)}\in[A(v),B(v)]
    \right)
    \right\}.
  \end{equation}
\end{problem}
Existing network sampling algorithms cannot be directly applied to
solve Problem~\ref{pro:sampling} for two reasons.  Firstly, we aim to
preserve network structure, i.e., we can only modify weights on
existing edges, not introduce new edges.  Secondly, we allow vertex
weights to vary within given intervals, while other approaches either
aim to maintain them exactly, or only in expectation.

\subsection{Interval constraints on vertices}
\label{sec:vertexconstraints}
In Problem~\ref{pro:sampling} we allow the weight of each vertex $v\in
V$ to vary on the interval $[A(v),B(v)]$. This sampling problem
becomes easier if these interval constraints are replaced with
equality constraints, i.e., a variant of the problem where the vertex
weights are preserved {\em exactly}.

To do this, we define an equivalent graph containing self-loops (edges
of type $e_v=\{v\}$) where the vertex weights are fixed, i.e.,
$W(v)=A(v)=B(v)$. The variability in the vertex weights is absorbed in
the self-loops. This graph with self-loops can be constructed using
the transformation shown in Fig.~\ref{fig:loopequivalence}. Using this
scheme, any graph with interval constraints on vertex weights and no
self-loops can be transformed to a graph with equality constraints on
vertex weights and self-loops, and vice versa.

\begin{figure}[ht!]
  \begin{center}
    \begin{tabular}[b]{c}
      \includegraphics[width=0.35\textwidth]{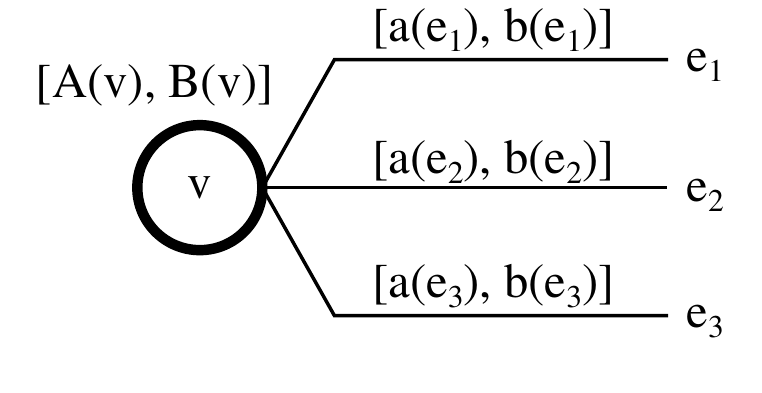}\\
      \small (a)
    \end{tabular}
    \hspace*{0.15\textwidth}
    \begin{tabular}[b]{c}
      \includegraphics[width=0.35\textwidth]{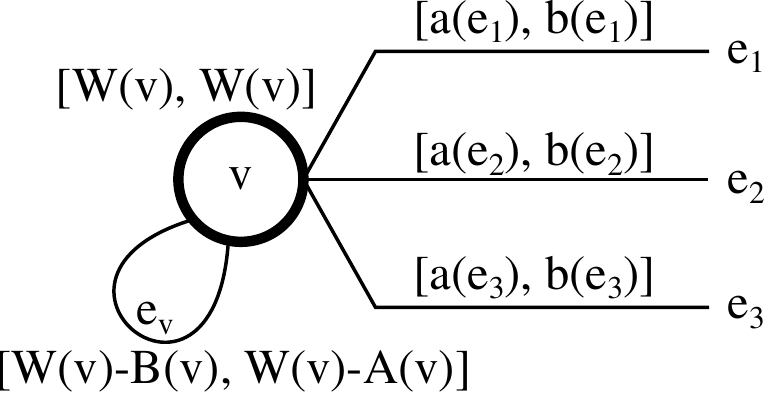}\\
      \small (b)
    \end{tabular}
    \caption{In the graph in (a) there are no self-loops and the
      vertex $v\in V$ has a range of allowed weights $[A(v),B(v)]$. In
      the graph in (b) the weight of vertex $v$ is fixed to $W(v)$ and
      there is a self-loop $e_v=\{v\}$ with a range of allowed weights
      given by $[W(v)-B(v),W(v)-A(v)]$, where
      $W(v)=w(e_1)+w(e_2)+w(e_3)$. These two graphs are equivalent in
      the sense that the allowed ranges of the weights of the edges
      $e_1$, $e_2$, and $e_3$ as the vertex weight (without the
      self-loop) $w(e_1)+w(e_2)+w(e_3)$ are the same for both
      graphs. It follows that a set of allowed weights ${\cal W}^*$
      given by Eq.~\eqref{eq:allowed} defined without self-loops and
      ranges for vertex weights, can be equivalently defined by graphs
      with self loops and fixed vertex weights.}
    \label{fig:loopequivalence}
  \end{center}
\end{figure}

We can now rewrite Eq.~\eqref{eq:allowed} as follows:
\begin{equation}
  \label{eq:allowed2}
        {\cal W}^*=\left\{
        w^*:E\mapsto{\mathbb{R}}
        \mid
        \left(
        \forall e\in E\ldotp
        w^*(e)\in[a(e),b(e)]
        \right)
        \wedge
        \left(
        \forall v\in V\ldotp
        \sum_{e\in n(v)}{w^*(e)}=W(v)
        \right)
        \right\}.
\end{equation}
The problem hence becomes to obtain a uniform sample from the set
${\cal W}^*$ of Eq.~\eqref{eq:allowed2}.  Note that in
Eq.~\eqref{eq:allowed2} the set of edges $E$ now contains self-loops
on those vertices that originally had interval constraints.  In the
following we therefore assume ---without loss of generality--- that
there are only equality constraints on vertex weights, i.e.,
$A(v)=B(v)$ for all $v\in V$.

\subsection{High level approach}
We continue by outlining a sketch of our solution to
Problem~\ref{pro:sampling}. At a high level our algorithm is a Markov
chain Monte Carlo method similar to the algorithm proposed in
\cite{ansmann2011constrained}. It starts from the observed set of edge
weights, introduces a small perturbation to a few of these at every
step, and runs until convergence. Let $C$ denote a \emph{collection of
  subsets of $E$}, i.e., every $E^\prime \in C$ is some (small) set of
edges.  The algorithm in \cite{ansmann2011constrained} as well as ours
can be sketched within a common framework as follows:
\begin{enumerate}
\item Initially, let the current state be the observed set of edge
  weights.
\item Select some $E^\prime \in C$ uniformly at random.
\item Perturb the weights of every edge in $E^\prime$ so that all
  constraints remain satisfied. (Exactly how this is done is described
  in detail below.)
\item Repeat steps 2--3 until convergence.
\end{enumerate}
The main difference between the method in
\cite{ansmann2011constrained} and the method presented here concerns
what the collection $C$ contains. It is crucial to make sure that $C$
is constructed such that the resulting Markov chain indeed converges
to a uniform distribution over ${\cal W}^*$. In other words, every
point in ${\cal W}^*$ must be reachable from every other point in
${\cal W}^*$ by a sequence of steps defined by $C$.

In the algorithm of \cite{ansmann2011constrained}, $C$ contains all
possible cycles of length four. Since in \cite{ansmann2011constrained} 
the underlying graph is assumed
to be a clique, there are plenty of
such cycles, and it can be shown that these are enough for the Markov
chain to reach a uniform distribution. Also, it is fairly easy to see
that in cycles of length four the edge weights can always be adjusted
in a simple manner so that all constraints remain satisfied. However,
in our case finding a suitable $C$ is \emph{complicated by the requirement
of not introducing new edges}. Simply choosing all cycles of length
four from $G$ is not enough. In the remainder of the paper we discuss
our main technical contribution:
\emph{an approach for constructing $C$ in general undirected graphs (not
only cliques) so that the resulting Markov chain converges to the
uniform distribution over ${\cal W}^*$.}
In addition, we show how a simple
transformation of the input graph allows us to \emph{extend the approach
also to directed graphs}.

\subsection{Solution for undirected graphs}
We first describe the solution for general undirected graphs.  We
assume for simplicity of discussion and without loss of generality
that the input graph $G$ consists of a single connected component.
(In general, we can independently sample each of the connected
components of the graph in the sampling process introduced later).
The problem of constructing a suitable collection $C$ of edges becomes
easier if we view our sampling problem in terms of systems of linear
equations.  This way we can express our problem using known concepts
from linear algebra.

As a first step, define the {\em incidence matrix} ${\bf
  A}\in\left\{0,1\right\}^{|V|\times|E|}$ of the graph $G$ in the
usual manner as ${\bf A}_{ve}=2I\left(v\in e\right)/|e|$.  Here $v\in
V$ and $e\in E$ and $I(\Box)$ is an indicator function which equals
unity if $\Box$ is true and is zero otherwise. Also, let ${\bf W} \in
\mathbb{R}^{|V|}$ denote the vector of observed vertex weights defined
by ${\bf W}_v=W(v)$ for all $v\in V$, and denote by ${\bf w}^* \in
\mathbb{R}^{|E|}$ the vector of edge weights. Given these, sampling
uniformly from ${\cal W}^*$ of Eq.~\eqref{eq:allowed2} is equivalent
to the problem of sampling uniformly from the set
\begin{equation}
  {\cal S}=\left\{ {\bf w}^*\in{\mathbb{R}}^{|E|}\mid {\bf A}{\bf w}^*={\bf
    W}\wedge\forall e\in E\ldotp {\bf w}^*_e\in\left[a(e),b(e)\right]
  \right\}.
\end{equation}
By our assumption the original observed weight vector ${\bf w}$ is in
${\cal W}^*$. It follows that ${\bf A}{\bf w}={\bf W}$ and therefore
${\bf w}\in{\cal S}$.

For the moment, let us focus only on the underdetermined linear system
${\bf A}{\bf w}^*={\bf W}$.  A simple known property of such systems
is that their solution space can be expressed as a sum of a single
known solution such as ${\bf w}\in{\cal S}$ and {\em any} vector ${\bf
  x}$ from the {\em null space} of ${\bf A}$. The null space of ${\bf
  A}$, denoted by ${\rm Null}({\bf A})$, is defined by the set ${\rm
  Null}({\bf A})=\left\{{\bf x}\in{\mathbb{R}}^{|E|}\mid {\bf A}{\bf
  x}={\bf 0}\right\}$.  It is easy to see that ${\bf A}{\bf w}^*={\bf
  W}$, where ${\bf w}^*={\bf w}+{\bf x}$, for any ${\bf x}\in{\rm
  Null}({\bf A})$.  Because of the constraints on edge weights, we
cannot simply use {\em any} ${\bf x}\in{\rm Null}({\bf A})$.  Instead,
${\bf x}$ must come from a {\em convex subset} of ${\rm Null}({\bf
  A})$.  Therefore, the problem of sampling uniformly from ${\cal S}$
is equivalent to the problem of sampling uniformly from said convex
subset of ${\rm Null}({\bf A})$.

This is a known problem and could be solved using textbook methods,
such as those described, e.g., in \cite{van2009xsample}.  Those
approaches, however, must usually compute a {\em basis} for ${\rm
  Null}({\bf A})$, which in general is a dense matrix of size $|E|
\times {\rm dim}({\rm Null}({\bf A}))$, where the cardinality of the
null space of ${\bf A}$ is in the same order of magnitude as $|E|$.
While this is not a problem as long as the incidence matrix ${\bf A}$
is fairly small, even storing such a matrix clearly becomes infeasible
for very large networks.  However, since ${\bf A}$ is the incidence
matrix of a network, a {\em sparse basis} is easily constructed by
combining {\em cycles} of $G$, as shown, e.g., in
\cite{AKBARI2006617}.

In short, this works as follows. We first find a {\em spanning tree}
$T$ of $G$. Every edge that does not belong to $T$ clearly induces a
cycle when combined with edges in $T$. Given $T$, every such
even-length cycle is directly an element of the basis, while
odd-length cycles are paired together to form elements of the basis
until it is complete. Details are given in
Section~\ref{sec:background} below. Note that such a sparse basis is
easily represented by a collection $C$ of subsets of edges, together
with appropriate weights for every edge.

\subsection{Directed graphs}
Next we show that the above discussed solution to the sampling problem
for undirected graphs (Problem~\ref{pro:sampling}) can also be applied
to directed graphs, by first transforming the directed graph into an
equivalent undirected graph. The algorithm proposed in this paper can
therefore directly be used to sample weights for both directed and
undirected graphs.

Let $G_D=(V_D,E_D)$ be a directed graph, where the $m_D$ vertices are
given by $V_D=[m_D]$ and the edges by $E_D\subseteq V\times V$. The
weight of the directed edge $e\in E_D$ is denoted by
$w_D(e)\in{\mathbb{R}}$. We define the {\em outgoing} edges of vertex
$v\in V_D$ as
\begin{equation}
  n_o(v)=\left\{(v',v'')\in E_D\mid v'=v\right\},
\end{equation}
and the {\em incoming} edges as
\begin{equation}
  n_i(v)=\left\{(v',v'')\in E_D\mid v''=v\right\}.
\end{equation}
The outgoing weight of a  vertex is given by
\begin{equation}
  W_o(v)=\sum_{e\in n_o(v)}{w_D(e)},
\end{equation}
and the incoming weight by
\begin{equation}
  W_i(v)=\sum_{e\in n_i(v)}{w_D(e)},
\end{equation}

We are now ready to define the sampling problem for directed graphs.
\begin{problem}\label{pro:samplingd}
  Given a connected directed graph $G_D=(V_D,E_D)$ and a set of
  intervals $[a_D(e),b_D(e)]$ for each edge $e\in E_D$ and
  $[A_o(v),B_o(v)]$ and $[A_i(v),B_i(v)]$ for each vertex $v\in V_D$,
  respectively, such that $a_D(e)\le w_D(e)\le b_D(e)$, $A_o(v)\le
  W_o(v)\le B_o(v)$, and $A_i(v)\le W_i(v)\le B_i(v)$, obtain a sample
  uniformly at random from the set of allowed weights ${\cal W}^*_D$,
  given by
  \begin{equation}\label{eq:allowedd}
    \begin{array}{ll}
      {\cal W}^*_D=\left\{
      w^*_D:E_D\mapsto{\mathbb{R}}
      \mid\right.&
      \left(
      \forall e\in E_D\ldotp
      w^*_D(e)\in[a_D(e),b_D(e)]
      \right)
      \wedge\\&
      \left(
      \forall v\in V_D\ldotp
      \sum_{e\in n_o(v)}{w^*_D(e)}\in[A_o(v),B_o(v)]
      \right)
      \wedge\\&\left.
      \left(
      \forall v\in V_D\ldotp
      \sum_{e\in n_i(v)}{w^*_D(e)}\in[A_i(v),B_i(v)]
      \right)
      \right\}.
    \end{array}
  \end{equation}
\end{problem}

We can solve Problem \ref{pro:samplingd} by the algorithm used to
solve Problem \ref{pro:sampling} by noticing that a directed graph can
be easily transformed to an equivalent undirected graph, as stated by
the following theorem.
\begin{theorem}\label{thm:dir}
  The set of allowed weights ${\cal W}_D^*$ of Eq.~\eqref{eq:allowedd}
  for a directed graph $G_D$ is equivalent to the set of allowed
  weights ${\cal W}^*$ of Eq.~\eqref{eq:allowed} for an undirected
  bipartite graph $G$ when the graph $G$ is defined as follows. The
  graph $G$ has $m=2m_D$ vertices, i.e., $V=[m]$. The set of
  undirected edges $E$ of $G$ is given by $E=\left\{
  \{v',v''+m_D\}\mid (v',v'')\in E_D \right\}$. We define a mapping
  $f: E\mapsto E_D$ as follows, $f(\{v',v''+m_D\})=(v',v'')$ for all
  $(v',v'')\in E_D$. The weight of an edge $e\in E$ is given by
  $w(e)=w_D(f(e))$ and the bounds by $a(e)=a_D(f(e))$ and
  $b(w)=b_D(f(e))$.  The vertices $1,\ldots,m_D$ correspond to
  outgoing weights and the vertices $m_D+1,\ldots,2m_D$ to incoming
  weights as follows,
  \begin{equation}
    W(v)=\left\{
    \begin{array}{lcl}
      W_o(v)&,&v\le m_D\\
      W_i(v-n_D)&,&v>m_D
    \end{array}
    \right.
    ,
  \end{equation}
  with the bounds given by
  \begin{equation}
    A(v)=\left\{
    \begin{array}{lcl}
      A_o(v)&,&v\le m_D\\
      A_i(v-n_D)&,&v>m_D
    \end{array}
    \right.
    ,
  \end{equation}
  and
  \begin{equation}
    B(v)=\left\{
    \begin{array}{lcl}
      B_o(v)&,&v\le m_D\\
      B_i(v-n_D)&,&v>m_D
    \end{array}
    \right.
    .
  \end{equation}
  Now, if $w^*$ is a uniform sample from ${\cal W}^*$ we can obtain a
  uniform sample $w^*_D$ from ${\cal W}^*_D$ in a straightforward way by
  setting $w^*_D(f(e))\leftarrow w^*(e)$ for all $e\in E$.
\end{theorem}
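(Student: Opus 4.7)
The plan is to show that the construction of $G$ from $G_D$ is a weight-preserving bijection that maps the constraint set $\mathcal{W}_D^*$ onto $\mathcal{W}^*$, so that the push-forward of the uniform measure on one set is the uniform measure on the other. I would organise the argument around the mapping $f \colon E \to E_D$ together with the induced map on weight vectors $\Phi \colon \mathbb{R}^{|E_D|} \to \mathbb{R}^{|E|}$ defined by $\Phi(w^*_D)(e) = w^*_D(f(e))$, and verify that $\Phi$ restricts to a bijection between $\mathcal{W}_D^*$ and $\mathcal{W}^*$.

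First I would check that $f$ is a bijection. By construction, $E$ has exactly one undirected edge $\{v', v''+m_D\}$ for each directed edge $(v',v'') \in E_D$, and the offset $m_D$ guarantees that no two distinct directed edges can collide on the same unordered pair in the bipartite vertex set $\{1,\dots,m_D\} \cup \{m_D+1,\dots,2m_D\}$. Since $G$ is bipartite with the stated bipartition, every edge of $G$ has exactly one endpoint in each side, so the pre-image under $f$ is well-defined. Consequently $\Phi$ is a linear bijection (in fact, a permutation of coordinates).

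Next I would verify that $\Phi$ carries the edge-interval constraints over on the nose: by definition $a(e) = a_D(f(e))$ and $b(e) = b_D(f(e))$, so $w^*(e) \in [a(e),b(e)]$ iff $w_D^*(f(e)) \in [a_D(f(e)),b_D(f(e))]$. For the vertex constraints, I would split into the two halves of the bipartition. For $v \le m_D$, an edge $e = \{v',v''+m_D\} \in E$ lies in $n(v)$ iff $v' = v$, which happens iff $f(e) = (v,v'') \in n_o(v)$; hence
\begin{equation}
\sum_{e \in n(v)} w^*(e) \;=\; \sum_{e' \in n_o(v)} w_D^*(e') \;=\; W_o(v),
\end{equation}
and this sum lies in $[A(v),B(v)] = [A_o(v),B_o(v)]$ exactly when the outgoing constraint at $v$ is satisfied in $G_D$. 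The case $v > m_D$ is symmetric and reduces, via the shift $v \mapsto v - m_D$, to the incoming constraint. Putting these together shows $\Phi(\mathcal{W}_D^*) = \mathcal{W}^*$.

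Finally, because $\Phi$ is a coordinate permutation, it is a linear isometry and its Jacobian is $\pm 1$, so the uniform (Lebesgue) measure on $\mathcal{W}^*$ pushes forward under $\Phi^{-1}$ to the uniform measure on $\mathcal{W}_D^*$. Therefore setting $w^*_D(f(e)) \leftarrow w^*(e)$ produces a uniform sample from $\mathcal{W}_D^*$ whenever $w^*$ is a uniform sample from $\mathcal{W}^*$. I do not anticipate a real obstacle; the only point that deserves care is confirming that the bipartite splitting of each vertex $v$ into $v$ and $v+m_D$ correctly decouples the outgoing and incoming constraints, which is exactly what the two cases of the piecewise definitions of $W$, $A$, and $B$ encode.
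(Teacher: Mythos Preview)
Your proposal is correct and follows the same approach as the paper, which simply states that the proof ``follows directly from the definitions.'' You have merely unpacked those definitions carefully---checking that $f$ is a bijection, that edge and vertex constraints correspond under the bipartite splitting, and that the coordinate permutation $\Phi$ preserves uniform measure---so there is nothing to compare or correct.
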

\begin{proof}
  The proof follows directly from the definitions.
\end{proof}

\subsection{Details of the algorithm}
\label{sec:background}
In this Section we give a detailed proof of the proposed algorithm, an
overview of which is given above.

\vspace*{2em}
\noindent \textbf{The CycleSampler Algorithm}\\
Assume ${\bf y}_i$ where $i\in [l]$ (where $l$ is
no smaller than the dimension of the null space ${\rm Null}({\bf A})$)
spans the null space ${\rm Null}({\bf A})$, i.e., any null space
vector ${\bf x}\in{\rm Null}({\bf A})$ can be formed as a linear
combination of vectors in ${\bf y}_i$. Given this basis, we can obtain
samples as follows:
\begin{enumerate}
\item First, transform the graph to a form where the vertex weights
  are fixed and the variability in them is described by self-loops, as
  in Fig. \ref{fig:loopequivalence}.
\item Initially, let the current state be the observed set of edge
  weights, ${\bf w}^*\leftarrow{\bf w}$, with the weight of
  self-loops initially set to zero.
\item Pick a vector $i\in [l]$ at random and let $[a,b]$ be the
  largest range of allowed values such that ${\bf w}^*+\alpha{\bf
    y}_i$ where $\alpha\in[a,b]$ stays within ${\cal W}^*$. Sample
  $\alpha$ uniformly at random from $[a,b]$.
\item Update ${\bf w}^*\leftarrow{\bf w}^*+\alpha{\bf y}_i$ and
  repeat from step 3 above.
\end{enumerate}
Note that because ${\cal W}^*$ is a simple convex space---an
$|E|$-dimensional rectangle---we can find $[a,b]$ for a given ${\bf
  y}_i$ efficiently by a simple loop over the non-zero dimensions of
${\bf y}_i$. The updates at step 4 form a Markov chain of edge weight
vectors ${\bf w}^*$.
\vspace*{2em}

\begin{theorem}
  \label{theorem:uniformsample}
  The \texttt{CycleSampler} algorithm asymptotically
provides (after a sufficient number of iterations) a uniform
sample from the set ${\cal W}^*$.
\end{theorem}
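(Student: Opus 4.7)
The plan is to verify the three classical ingredients for convergence of a Markov chain on a continuous state space: the chain stays inside $\mathcal{W}^*$, the uniform measure on $\mathcal{W}^*$ is stationary, and the chain is irreducible and aperiodic. Once these are in place, the claim follows from the standard ergodic theorem for Harris-recurrent chains.

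First I would establish invariance. Since each $\mathbf{y}_i$ lies in $\mathrm{Null}(\mathbf{A})$, the update $\mathbf{w}^* \mapsto \mathbf{w}^* + \alpha \mathbf{y}_i$ preserves $\mathbf{A}\mathbf{w}^* = \mathbf{W}$, and the interval $[a,b]$ in step 3 was defined precisely so that the box constraints $\mathbf{w}^*_e \in [a(e),b(e)]$ are preserved for every $e \in E$. Hence $\mathcal{W}^*$ is an invariant set. For stationarity, fix an index $i$. By convexity of $\mathcal{W}^*$, the intersection $\mathcal{W}^* \cap \{\mathbf{w}^* + \alpha \mathbf{y}_i : \alpha \in \mathbb{R}\}$ is a closed interval, and the algorithm draws $\alpha$ uniformly from this interval. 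Crucially, this interval depends only on the line, not on the starting point within it, so the one-step kernel restricted to each such line segment is symmetric. Averaging over the uniform choice of $i \in [l]$ preserves this symmetry, giving detailed balance with respect to the uniform measure on $\mathcal{W}^*$, which is therefore stationary.

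The main obstacle is irreducibility: given $\mathbf{w}_1, \mathbf{w}_2 \in \mathcal{W}^*$, the difference $\mathbf{w}_2 - \mathbf{w}_1$ lies in $\mathrm{Null}(\mathbf{A})$ and can be written as $\sum_{i=1}^l c_i \mathbf{y}_i$, but applying the $l$ corresponding moves sequentially may leave the polytope. The resolution is to exploit convexity: the segment $[\mathbf{w}_1, \mathbf{w}_2]$ lies entirely in $\mathcal{W}^*$. I would split it into $N$ equal pieces, each equal to $\sum_{i=1}^l (c_i/N)\mathbf{y}_i$; for $N$ large enough, the length-$l$ piecewise path realising each piece (applying the basis moves one by one) stays in an arbitrarily small tube around the straight segment and hence in $\mathcal{W}^*$, after a preliminary infinitesimal perturbation of $\mathbf{w}_1$ and $\mathbf{w}_2$ to the relative interior to handle boundary cases. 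This shows that from any $\mathbf{w}_1$ the chain has positive transition density into any neighbourhood of any $\mathbf{w}_2$ in at most $Nl$ steps, establishing $\varphi$-irreducibility with $\varphi$ the uniform measure on $\mathcal{W}^*$.

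Aperiodicity is then automatic, since for any starting state and any neighbourhood of it one can pick any $\mathbf{y}_i$ and $\alpha$ close to zero with positive probability density, ruling out any cyclic decomposition. Combining invariance of the uniform measure with $\varphi$-irreducibility and aperiodicity, the Harris ergodic theorem (in its continuous-state-space form) yields asymptotic convergence of the distribution of $\mathbf{w}^*$ to the uniform law on $\mathcal{W}^*$, as asserted. The only step I expect to require real care is the irreducibility argument, because the move set is restricted to single basis directions and a naive decomposition of $\mathbf{w}_2 - \mathbf{w}_1$ need not stay feasible; convexity together with arbitrarily fine subdivision of the connecting segment is the decisive idea.
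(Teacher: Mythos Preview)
Your proposal is correct and follows the same two-pillar structure as the paper's proof: (i) reachability/irreducibility from the fact that the $\mathbf{y}_i$ span $\mathrm{Null}(\mathbf{A})$, and (ii) detailed balance with respect to the uniform measure from the symmetry of the one-step transition kernel. The paper's argument is extremely terse---it simply asserts both points in a single sentence---whereas you supply the missing justification for irreducibility (spanning alone does not guarantee that a sequence of axis-aligned moves stays inside the polytope; your convexity-plus-subdivision argument is exactly what is needed to close that gap) and make the appeal to the Harris ergodic theorem explicit.
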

\begin{proof}
This follows from the facts that (i) because ${\bf y}_i$ span the null
space all of the points of the null space can be reached with a
non-vanishing probability and that (ii) the transition probability
from state ${\bf w}^*\in{\cal S}$ to state ${{\bf w}^*}'\in{\cal S}$
is equal to the transition probability from state ${{\bf w}^*}'$ to
state ${\bf w}^*$, i.e., the Markov chain satisfies the detailed
balance condition for a uniform distribution.
\end{proof}
It remains to find a complete basis ${\bf y}_i$ of the null space
${\rm Null}({\bf A})$. This can be done using a spanning tree of the
connected graph $G$. In the actual implementation we use a spanning
tree found by a standard breadth-first search, but any instance of a
spanning tree can be used. We choose one of the vertices (it does not
matter which) as the root vertex $v_{root}\in V$ of the tree. We
denote by $E_s\subseteq E$ the $|V|-1$ edges that appear in the
spanning tree and by $F=E\setminus E_s$ the remaining $|E|-|V|+1$
edges that do not appear in the spanning tree. To make the derivation
easier to follow we provide an example of the introduced concepts
later in Sec. \ref{sec:example} which can be read in parallel with the
derivation below.

We further denote by $E(v)\subseteq E_s$ the set of edges in the
spanning tree between vertex $v\in V$ and the root vertex. For the
root vertex $v_{root}$ we have $E(v_{root})=\emptyset$. We define the
depth ${\rm depth}(e)$, where $e\in E_s$, of the edge $e$ in the
spanning tree to be the number of edges between $e$ and the root
vertex, the edges adjacent to the root vertex having a depth of
zero. We further define the depth of vertex $v$ by its distance from
the root vertex, i.e., ${\rm depth}(v)=|E(v)|$.

We define a {\em cycle} ${\bf c}(v,v')\in{\mathbb{R}}^{|E|}$ for
each edge $\{v,v'\}\in F$ by
\begin{equation}\label{eq:cycle}
  {\bf c}(v,v')={\bf n}(\{v,v'\})
  +\sum_{e\in E(v)}{(-1)^{{\rm depth}(v)+{\rm depth}(e)}{\bf n}(e)}
  +\sum_{e'\in E(v')}{(-1)^{{\rm depth}(v')+{\rm depth}(e')}{\bf n}(e')},
\end{equation}
where ${\bf n}(e)\in \{0,1\}^{|E|}$ is a 0-1 vector defined by ${\bf
  n}(e)_{e'}=I(e=e')$. We use the shorthand-notation to denote ${\bf
  c}(e)={\bf c}(v,v')$ where $e=\{v,v'\}\in F$.

We further split the edges not in the spanning tree $F$ into {\em
  clean edges},
\begin{equation}
  F_c=\left\{\{v,v'\}\in F \mid {\rm depth}(v)+{\rm depth}(v') {\rm ~is~odd}\right\},
\end{equation}
and {\em dirty edges},
\begin{equation}
  F_d=\left\{ \{v,v'\}\in F \mid v=v' \vee \left({\rm depth}(v)+{\rm
    depth}(v') {\rm ~is~even}\right)\right\}.
\end{equation}
Notice that the set of clean edges $F_c$ cannot contain self-loops,
but the set of dirty edges $F_d$ may contain self-loops (i.e., $v =
v'$). The non-zero elements of a cycle introduced by clean edges
contain graph loops with an even number of edges, while a cycle
introduced by a dirty edge contains a graph loop with an odd number of
edges. Indeed, by taking a spanning tree of a graph and adding an edge
not in graph we always get a unique {\em graph cycle} that forms the
{\em cycle basis} of the graph. In particular, it is well-known that
the cycle basis of a graph contains only even-length cycles if and
only if the graph is bipartite. Therefore $F_d=\emptyset$ is
equivalent to the statement that the graph $G$ is bipartite.

We construct a basis of the null space as follows.

(i) For each clean edge $e\in F_c$ we define a basis vector by the
respective cycle,
\begin{equation}\label{eq:yo}
  {\bf y}_i={\bf c}(e).
\end{equation}
For an example, see Fig. \ref{fig:basis}a. The non-zero elements of
the basis vector ${\bf y}_i$ form a graph cycle of even length, with
alternating weights of $\pm 1$.

(ii) For each pair of dirty edges $e_1\in F_d$ and $e_2\in F_d$ where
$e_1\ne e_2$ the basis vector is given by a linear combination of two
cycles,
\begin{equation}\label{eq:ye}
  {\bf y}_i={\bf c}(e_1)-(-1)^{{\rm depth}(e_1)+{\rm depth}(e_2)}{\bf c}(e_2).
\end{equation}
For an example, see Fig. \ref{fig:basis}b--d. Again, the non-zero
elements of the basis vector ${\bf y}_i$ form a graph cycle of even
length.

The number of distinct basis vectors, defined by Eqs. \eqref{eq:yo}
and \eqref{eq:ye}, is therefore $|F_c|+|F_d|\left(|F_d|-1\right)/2$.

We show that the basis vectors of Eqs. \eqref{eq:yo} and \eqref{eq:ye}
form a complete basis of the null space ${\rm Null}({\bf A})$ by first
proving the following three lemmas.

\begin{lemma}\label{lemma:yo}
  The vectors defined by Eq. \eqref{eq:yo} are in the null space ${\rm
    Null}({\bf A})$ and they span an $|F_c|$ dimensional space.
\end{lemma}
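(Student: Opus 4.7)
The plan is to verify two things: (a) each vector ${\bf y}_i={\bf c}(e)$ with $e\in F_c$ lies in ${\rm Null}({\bf A})$, and (b) the $|F_c|$ vectors so defined are linearly independent, hence they span an $|F_c|$-dimensional subspace of the null space.

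For part (a), I first want to argue that although Eq.~\eqref{eq:cycle} sums over the full tree paths $E(v)$ and $E(v')$ to the root, the contributions from the shared portion (the part from the least common ancestor $u$ of $v$ and $v'$ up to the root) cancel. For any tree edge $e$ appearing in both $E(v)$ and $E(v')$, its coefficient is $(-1)^{{\rm depth}(v)+{\rm depth}(e)} + (-1)^{{\rm depth}(v')+{\rm depth}(e)}$, which vanishes precisely because $e\in F_c$ means ${\rm depth}(v)+{\rm depth}(v')$ is odd, so the two parities are opposite. Consequently ${\bf c}(e)$ is supported exactly on the edges of a genuine graph cycle of even length, namely $\{v,v'\}$ together with the two tree paths from $v$ and $v'$ to the LCA. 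Moreover, a direct parity check shows the nonzero entries have signs $\pm 1$ that alternate along this cycle: along each tree path, consecutive edges differ in edge-depth by $1$, and the sign of the chord $\{v,v'\}$ matches consistently with the endpoints because the cycle length is even.

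Given that ${\bf c}(e)$ is a $\pm 1$ alternating signed indicator of an even cycle in $G$, the verification ${\bf A}{\bf c}(e)={\bf 0}$ is immediate: for any vertex $u$ not on the cycle the dot product with the corresponding row of ${\bf A}$ is trivially zero; for any vertex $u$ on the cycle, exactly two of its incident edges appear in the support, and by alternation their signs are opposite, so they cancel. (Since clean edges have $v\ne v'$, there is no self-loop contribution to worry about here, and the $2/|e|$ factor in the definition of ${\bf A}_{ve}$ evaluates to $1$ for every edge involved.)

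For part (b), I use the fact that each clean edge $e\in F_c$ is a chord of the spanning tree, so $e$ appears in its own cycle ${\bf c}(e)$ with coefficient $+1$ (from the ${\bf n}(\{v,v'\})$ term) but does not appear in ${\bf c}(e')$ for any other $e'\in F_c$, since those cycles involve only the chord $e'$ and edges of the spanning tree. Reading off the coordinate indexed by $e$ in a hypothetical linear dependence $\sum_{e\in F_c}\alpha_e{\bf c}(e)={\bf 0}$ forces $\alpha_e=0$, so the vectors are linearly independent. Together with part (a), this shows the collection spans an $|F_c|$-dimensional subspace of ${\rm Null}({\bf A})$. The only delicate step in the whole argument is the parity cancellation on the shared root-path portion, which is the single place where the clean-edge hypothesis is essential; the rest follows from standard cycle-space bookkeeping.
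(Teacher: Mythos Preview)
Your proof is correct and follows the same two-part structure as the paper: verify ${\bf A}{\bf y}_i={\bf 0}$ and then establish linear independence via the unique chord coordinate. The paper's own argument is considerably terser---it simply asserts that ${\bf A}{\bf y}_i={\bf 0}$ holds and then gives exactly your linear-independence argument---so your explicit LCA cancellation and alternating-sign analysis fill in details the paper leaves to the reader, but the underlying approach is the same.
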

\begin{proof}
  A vector ${\bf y}_i$ defined by Eq.~\eqref{eq:yo} is in the null space
  ${\rm Null}({\bf A})$, because the equation ${\bf A}{\bf y}_i={\bf 0}$ is
  satisfied for all ${\bf y}_i$.

  The vectors ${\bf y}_i$ are clearly linearly independent, because
  each of the vectors contain a unique non-zero dimension given by an
  edge $e\in F_c$ which is zero in all other vectors. Therefore, the
  $|F_c|$ vectors span an $|F_c|$ dimensional space.
\end{proof}

\begin{lemma}\label{lemma:ye}
  If there are dirty edges, i.e., $F_d\ne\emptyset$, the vectors
  defined by Eq.~\eqref{eq:ye} are in the null space ${\rm Null}({\bf
    A})$, they span an $|F_d|-1$ dimensional space, and they are
  independent of the vectors defined by Eq.~\eqref{eq:yo}.
\end{lemma}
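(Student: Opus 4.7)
The plan is to prove the three claims in sequence by first establishing a clean formula for ${\bf A}{\bf c}(e)$ when $e\in F_d$, since this is precisely what fails to vanish when $e$ is dirty.

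First I would compute ${\bf A}{\bf c}(e)$ for $e=\{v,v'\}\in F_d$. The vector ${\bf c}(e)$ is supported on the non-tree edge $e$ and on the tree edges lying along the two paths $E(v)$ and $E(v')$, with coefficients that alternate $\pm 1$ along each path. At every internal tree vertex the two incident tree-edge coefficients have opposite signs and cancel. At the endpoints $v$ and $v'$ the terminal tree-edge coefficient $-1$ cancels against the $+1$ coming from ${\bf n}(\{v,v'\})$ in the ordinary case; in the self-loop corner case the factor $|e|=1$ inside ${\bf A}_{ve}=2/|e|$ exactly absorbs the doubled path contribution. On the shared portion of the two paths, from the root down to the lowest common ancestor of $v$ and $v'$, the contributions from $E(v)$ and $E(v')$ have the same sign --- because $e$ is dirty, ${\rm depth}(v)$ and ${\rm depth}(v')$ share a parity --- and they leave a residual $2(-1)^{{\rm depth}(v)}$ at the root vertex. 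Therefore ${\bf A}{\bf c}(e)=2(-1)^{{\rm depth}(e)}{\bf e}_{v_{\rm root}}$, where ${\rm depth}(e)$ is read as the common parity of the endpoint depths and ${\bf e}_{v_{\rm root}}$ denotes the indicator of the root; this is also the sense in which ${\rm depth}(e_1)+{\rm depth}(e_2)$ in Eq.~\eqref{eq:ye} should be understood.

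With this identity in hand, the sign $-(-1)^{{\rm depth}(e_1)+{\rm depth}(e_2)}$ in Eq.~\eqref{eq:ye} is manifestly the unique choice that cancels the two root residues, so every ${\bf y}_i$ from Eq.~\eqref{eq:ye} lies in ${\rm Null}({\bf A})$. For the dimension count I would fix any pivot $e_0\in F_d$ and work with the ``star'' family ${\bf z}_j={\bf c}(e_j)-(-1)^{{\rm depth}(e_0)+{\rm depth}(e_j)}{\bf c}(e_0)$ indexed by $e_j\in F_d\setminus\{e_0\}$. The telescoping relation $(-1)^{d_i+d_j}=(-1)^{d_i+d_0}(-1)^{d_0+d_j}$ lets me rewrite any pair vector of Eq.~\eqref{eq:ye} as ${\bf z}_i-(-1)^{{\rm depth}(e_i)+{\rm depth}(e_j)}{\bf z}_j$, so the span of all $\binom{|F_d|}{2}$ pair vectors collapses to the span of the $|F_d|-1$ vectors $\{{\bf z}_j\}$. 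These are linearly independent because each ${\bf z}_j$ carries a $+1$ on the non-tree coordinate indexed by $e_j$, whereas every other ${\bf z}_k$ ($k\neq j$) is zero on that coordinate (its only non-tree support lies in $\{e_0,e_k\}$). Hence the span has dimension exactly $|F_d|-1$.

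Finally, to show independence from the Eq.~\eqref{eq:yo} vectors, I would observe that every ${\bf c}(e)$ is zero on $F\setminus\{e\}$, so any linear combination of Eq.~\eqref{eq:ye} vectors vanishes on every clean edge in $F_c$, while each Eq.~\eqref{eq:yo} vector carries a $+1$ at a distinct clean edge. A hypothetical dependence between the two families therefore decouples: restricting to $F_c$-coordinates forces all Eq.~\eqref{eq:yo} coefficients to vanish (by the same unique-pivot argument as in Lemma~\ref{lemma:yo}), after which the pivot argument above forces the remaining Eq.~\eqref{eq:ye} coefficients to vanish. The main obstacle I anticipate is the first step --- carefully tracking the $\pm 1$ bookkeeping of ${\bf A}{\bf c}(e)$ across the lowest common ancestor and through the self-loop corner case --- after which the dimension and independence statements reduce to standard unique-coordinate arguments.
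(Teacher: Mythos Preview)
Your proposal is correct and follows the same underlying idea as the paper --- use the unique non-tree coordinate carried by each cycle ${\bf c}(e)$ as a pivot to read off linear independence --- but you execute it more carefully in two respects. First, the paper simply asserts that ${\bf A}{\bf y}_i={\bf 0}$, whereas you actually compute ${\bf A}{\bf c}(e)=2(-1)^{{\rm depth}(e)}{\bf e}_{v_{\rm root}}$ and check that the sign in Eq.~\eqref{eq:ye} is exactly what is needed to cancel the two root residues; this is the substantive step and your sketch handles both the LCA bookkeeping and the self-loop corner case correctly (it matches the worked example in Table~\ref{tab:example1}). Second, for the dimension count the paper orders $F_d$ linearly and uses the $|F_d|-1$ ``chain'' pairs $(e_i,e_{i+1})$, whereas you fix a pivot $e_0$ and use the $|F_d|-1$ ``star'' pairs; both give $|F_d|-1$ independent vectors by the same unique-coordinate argument. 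The genuine addition in your version is the telescoping identity showing that \emph{every} pair vector lies in the span of the star family, i.e.\ that the span is \emph{at most} $(|F_d|-1)$-dimensional. The paper's proof of this lemma only establishes the lower bound and implicitly defers the upper bound to Lemma~\ref{lemma:dim}; your argument makes the lemma self-contained. The independence-from-Eq.~\eqref{eq:yo} step is the same in both.
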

\begin{proof}
  First, assume that there are at least two even edges, because
  otherwise there are no pairs and hence no vectors defined by
  Eq.~\eqref{eq:ye}. A vector defined by Eq.~\eqref{eq:ye} is in the
  null space, because ${\bf A}{\bf y}_i={\bf 0}$ is satisfied.

  The vectors span an $|F_d|-1$ dimensional subspace. This can be seen
  by first arranging the vectors in $F_d$ in an arbitrary order,
  numbered by $1,\ldots,|F_d|$ and by forming $|F_d|-1$ pairs by
  combining the $i$th and $(i+1)$th vectors into pairs, respectively,
  where $i\in[|F_d|-1]$. A pair of $i$th and $(i+1)$th vectors is
  independent of the previous pairs, because it contains a non-zero
  value for the edge $e\in F_d$ related to ${\bf c}_{i+1}$ that has a
  zero value for all of the previous pairs.

  A vector ${\bf y}_i$ defined by Eq. \eqref{eq:ye} is linearly
  independent of any vector defined by \eqref{eq:yo}, because the
  vector contains a non-zero element for two edges in $F_d$ that do
  not occur in any of the vectors defined by Eq.~\eqref{eq:yo}.
\end{proof}

\begin{lemma}\label{lemma:dim}
  The dimensionality of the null space ${\rm Null}({\bf A})$ is $|F_c|$ if
  there are no even edges and $|F_c|+|F_d|-1$ if there are even
  edges.
\end{lemma}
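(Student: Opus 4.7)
The plan is to combine the two previous lemmas with the rank--nullity theorem. Lemmas \ref{lemma:yo} and \ref{lemma:ye} already exhibit $|F_c|$ linearly independent vectors in $\mathrm{Null}({\bf A})$ unconditionally, plus another $|F_d|-1$ independent ones when $F_d \ne \emptyset$, so the desired lower bounds on $\dim(\mathrm{Null}({\bf A}))$ are in hand. The task reduces to establishing matching upper bounds, for which I would compute $\mathrm{rank}({\bf A})$ and use $\dim(\mathrm{Null}({\bf A})) = |E| - \mathrm{rank}({\bf A})$.

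First I would note the bookkeeping identity $|F_c| + |F_d| = |E| - |V| + 1$, which holds because $F_c \cup F_d = E \setminus E_s$ and $|E_s| = |V|-1$ (a spanning tree of a connected graph on $|V|$ vertices). So the two cases of the lemma are equivalent to proving $\mathrm{rank}({\bf A}) = |V|-1$ when $F_d = \emptyset$ and $\mathrm{rank}({\bf A}) = |V|$ when $F_d \ne \emptyset$.

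To get the common lower bound $\mathrm{rank}({\bf A}) \ge |V|-1$, I would produce $|V|-1$ linearly independent rows directly from the spanning tree. Enumerate the non-root vertices in order of decreasing depth; for each such $v$, the tree-edge $e_v$ joining $v$ to its parent is a column in which the row of $v$ is nonzero while every previously processed row (which corresponds to a vertex of equal or greater depth and different from the parent of $v$) is zero. This gives a triangular structure on the submatrix of the $|V|-1$ non-root rows and establishes independence.

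To separate the two cases, I would use the bipartition structure dictated by the depth parity: if $F_d = \emptyset$, then $G$ is bipartite with parts $V_1, V_2$ given by even/odd depth, and $\sum_{v \in V_1} {\bf A}_{v,\cdot} - \sum_{v \in V_2} {\bf A}_{v,\cdot} = {\bf 0}$ because every edge $e$ (with $|e|=2$) contributes $+1$ from one part and $-1$ from the other; this yields a nontrivial row dependence, so $\mathrm{rank}({\bf A}) \le |V|-1$, forcing equality. If $F_d \ne \emptyset$, any hypothetical dependence $\sum_v c_v {\bf A}_{v,\cdot} = {\bf 0}$ forces $c_u = -c_{u'}$ for every edge $\{u,u'\} \in E_s$, hence propagates along the spanning tree to give $c_v = (-1)^{\mathrm{depth}(v)} c_{v_{\mathrm{root}}}$; applying this to a dirty edge (either a self-loop, whose column has entry $2 c_v = 0$, or an edge closing an odd cycle, which forces $c_{v_{\mathrm{root}}} = -c_{v_{\mathrm{root}}}$) collapses all coefficients to zero, giving $\mathrm{rank}({\bf A}) = |V|$.

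The main obstacle I expect is bookkeeping in the last step, in particular uniformly handling the two flavours of dirty edges (self-loops contributing entry $2$ versus ordinary odd-cycle chords contributing entries $1$) so that the coefficient-chasing argument works in both subcases; however, this is a routine adaptation of the classical rank computation for unsigned incidence matrices.
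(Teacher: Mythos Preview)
Your proposal is correct and follows essentially the same route as the paper: both arguments compute $\mathrm{rank}({\bf A})$ by analysing the left null space (row dependencies) of ${\bf A}$, distinguish the bipartite case $F_d=\emptyset$ from the non-bipartite case $F_d\ne\emptyset$ via the parity/sign-propagation argument, and then invoke rank--nullity together with $|F|=|E|-|V|+1$. The only notable difference is cosmetic: you establish $\mathrm{rank}({\bf A})\ge |V|-1$ by a triangular argument on the spanning-tree columns, whereas the paper obtains it by deleting one row and observing that some column then has a single nonzero entry.
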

\begin{proof}
  Consider the rank of matrix ${\bf A}$. The rank of the matrix is at
  most the number of its rows, i.e., ${\rm rank}({\bf A})\le |V|$. The
  equality does not hold if and only if there is a non-zero vector
  ${\bf v}\in{\mathbb{R}}^{|V|}$ such that ${\bf v}^T{\bf A}={\bf
    0}$. The vector ${\bf v}$ must satisfy the following
  properties. If a vertex $i\in V$ has a self-loop its weight must be
  zero, i.e., ${\bf v}_i=0$, because the column (edge) that represent
  a self-loop has only one non-zero value. A pair of rows (vertices)
  connected with an edge, $\{i,j\}\in E$, must have opposite signs,
  i.e., ${\bf v}_i=-{\bf v}_j$, otherwise the column (edge) in the
  matrix product ${\bf v}^T{\bf A}$ would be non-zero. Because the
  graph is connected we can construct a vector ${\bf v}$ simply by
  starting from one row (vertex), e.g., $i=1$ and setting ${\bf
    v}_1\leftarrow x$, where $x$ is some number. We can then
  iteratively follow any path in the graph, and assign values for the
  remaining rows in ${\bf v}$, the weights of the items in ${\bf v}$
  are therefore $x$ or $-x$.

  Consider first the case where there are dirty edges, i.e.,
  $F_d\ne\emptyset$, and there is at least one cycle in the graph with
  an odd number of edges. If we follow this cycle it leads to the
  situation where $x=-x$, meaning that ${\bf v}={\bf 0}$ is the only
  viable solution and therefore ${\rm rank}({\bf A})=|V|$. According
  to the rank-nullity therem the rank of the null space is $|E|-{\rm
    rank}({\bf A})=|E|-|V|=|F|-1=|F_c|+|F_d|-1$, which proves the
  lemma for the case $F_e\ne\emptyset$.

  Then consider the case where there are no even edges, i.e.,
  $F_e=\emptyset$. If there are no even edges then all graph cycles
  are of even length. The graph cycle basis has only even cycles if
  and only if the graph is bipartite and all vertices can be labeled,
  e.g., by $-1$ and $+1$ according to this bipartite graph. If the
  vector ${\bf v}$ is constructed according to this labeling then
  ${\bf v}^T{\bf A}={\bf 0}$ is satisfied and we therefore have ${\rm
    rank}({\bf A})\le|V|-1$. Next, we consider the rank of a matrix
  ${\bf A}$ with the first row removed, denoted by ${\bf A}'$. This
  matrix contains (because the graph is connected) at least one column
  which contains only one non-zero entry. The rank of this matrix is
  therefore at least $|V|-1$, from which it follows that the ${\rm
    rank}({\bf A})=|V|-1$. The rank of the null space is then
  according to the rank-nullity theorem $|E|-{\rm rank}({\bf
    A})=|E|-|V|+1=|F|=|F_c|$. This proves the lemma for the case of
  $F_c=\emptyset$.
\end{proof}

The following theorem follows directly from Lemmas \ref{lemma:yo},
\ref{lemma:ye}, and \ref{lemma:dim} above.
\begin{theorem}\label{theorem:complete}
  The basis vectors defined by Eqs. \eqref{eq:yo} and \eqref{eq:ye}
  span the null space ${\rm Null}({\bf A})$ for a connected graph
  $G=(V,E)$. The dimensionality of the null space is $|E|-|V|+1$ if
  the graph $G$ is bipartite and $|E|-|V|$ otherwise.
\end{theorem}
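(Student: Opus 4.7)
The plan is to assemble the three preceding lemmas to obtain both statements of Theorem~\ref{theorem:complete}: that the vectors of Eqs.~\eqref{eq:yo} and \eqref{eq:ye} span ${\rm Null}({\bf A})$, and that the dimension of this null space is $|E|-|V|+1$ in the bipartite case and $|E|-|V|$ otherwise.

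First I would split into the two cases indicated by Lemma~\ref{lemma:dim}. In the bipartite case, $F_d=\emptyset$, so no vectors arise from Eq.~\eqref{eq:ye}, and Lemma~\ref{lemma:yo} shows that the $|F_c|$ vectors from Eq.~\eqref{eq:yo} sit inside ${\rm Null}({\bf A})$ and are linearly independent, spanning an $|F_c|$-dimensional subspace. By Lemma~\ref{lemma:dim}, $\dim{\rm Null}({\bf A})=|F_c|$ in this case, so these vectors already form a complete basis. In the non-bipartite case, $F_d\neq\emptyset$, and combining Lemmas~\ref{lemma:yo} and~\ref{lemma:ye} I get a set of vectors in ${\rm Null}({\bf A})$ of total dimension $|F_c|+(|F_d|-1)$, using the explicit independence claim of Lemma~\ref{lemma:ye} that the two families of vectors do not overlap linearly. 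Lemma~\ref{lemma:dim} then states that $\dim{\rm Null}({\bf A})=|F_c|+|F_d|-1$, so again the spanning set has the right cardinality and is therefore a basis.

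To finish, I would translate the counts $|F_c|$ and $|F_c|+|F_d|-1$ into the statement $|E|-|V|+1$ or $|E|-|V|$. This is bookkeeping on a spanning tree: since $E_s$ contains exactly $|V|-1$ edges and $F=E\setminus E_s$, we have $|F|=|F_c|+|F_d|=|E|-|V|+1$. In the bipartite case $|F_d|=0$, so $|F_c|=|E|-|V|+1$, which matches the first displayed dimension. In the non-bipartite case $|F_c|+|F_d|-1=|F|-1=|E|-|V|$, matching the second.

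The only subtle point I expect to encounter is the edge case where $|F_d|=1$ in the non-bipartite setting: then Eq.~\eqref{eq:ye} contributes no vectors (there are no pairs) and the dimension contribution $|F_d|-1=0$ is correct, but one must verify that Lemma~\ref{lemma:dim} still delivers $|F_c|+|F_d|-1=|F_c|$ in this case, which it does. Apart from that, the proof is a direct assembly: nothing beyond invoking the three lemmas and carrying out the dimension arithmetic $|F_c|+|F_d|=|E|-|V|+1$.
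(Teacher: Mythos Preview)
Your proposal is correct and follows exactly the approach of the paper, which simply states that the theorem ``follows directly from Lemmas~\ref{lemma:yo}, \ref{lemma:ye}, and~\ref{lemma:dim}'' without further elaboration. You have spelled out in detail the case split and the spanning-tree bookkeeping $|F|=|E|-|V|+1$ that the paper leaves implicit, but the underlying argument is identical.
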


\subsection{Illustrating Example}
\label{sec:example}
In this section we provide a small example graph illustrating the
above discussed concepts. Consider again the network introduced above
in Fig.~\ref{fig:example1:simple}, with six vertices
$V=\{1,2,3,4,5,6\}$ and seven edges. We now add two self-loops to
nodes 1 and 6 in this network, giving the nine edges
\begin{displaymath}
  E=\{\{1\},\{1,2\},\{1,3\},\{1,6\},\{2,3\},\{3,4\},\{4,5\},\{4,6\},\{6\}\}.
\end{displaymath}
Further assume that the edges in the spanning tree of this graph are
given by
\begin{displaymath}
  E_s=\{\{1,3\},\{2,3\},\{3,4\},\{4,5\},\{4,6\}\}
\end{displaymath}
and the
edges not in the spanning tree by
\begin{displaymath}
  F=\{\{1\},\{1,2\},\{1,6\},\{6\}\}.
\end{displaymath}
The root vertex is given by $v_{root}=3$. This graph is shown in
Fig.~\ref{fig:example1} (the root of the spanning tree is marked with
grey) and the corresponding matrix ${\bf A}$ is shown in
Table~\ref{tab:example1}. The cycle vectors are shown in
Fig.~\ref{fig:cycles} and the basis of the null space in
Fig.~\ref{fig:basis}. A different root vertex could be used when
constructing the spanning tree, which would lead to a different set of
vectors that span the null space; however, any choice of spanning tree
or root vertex will span the same null space.

\begin{figure}
  \begin{center}
    \includegraphics[width=0.48\textwidth]{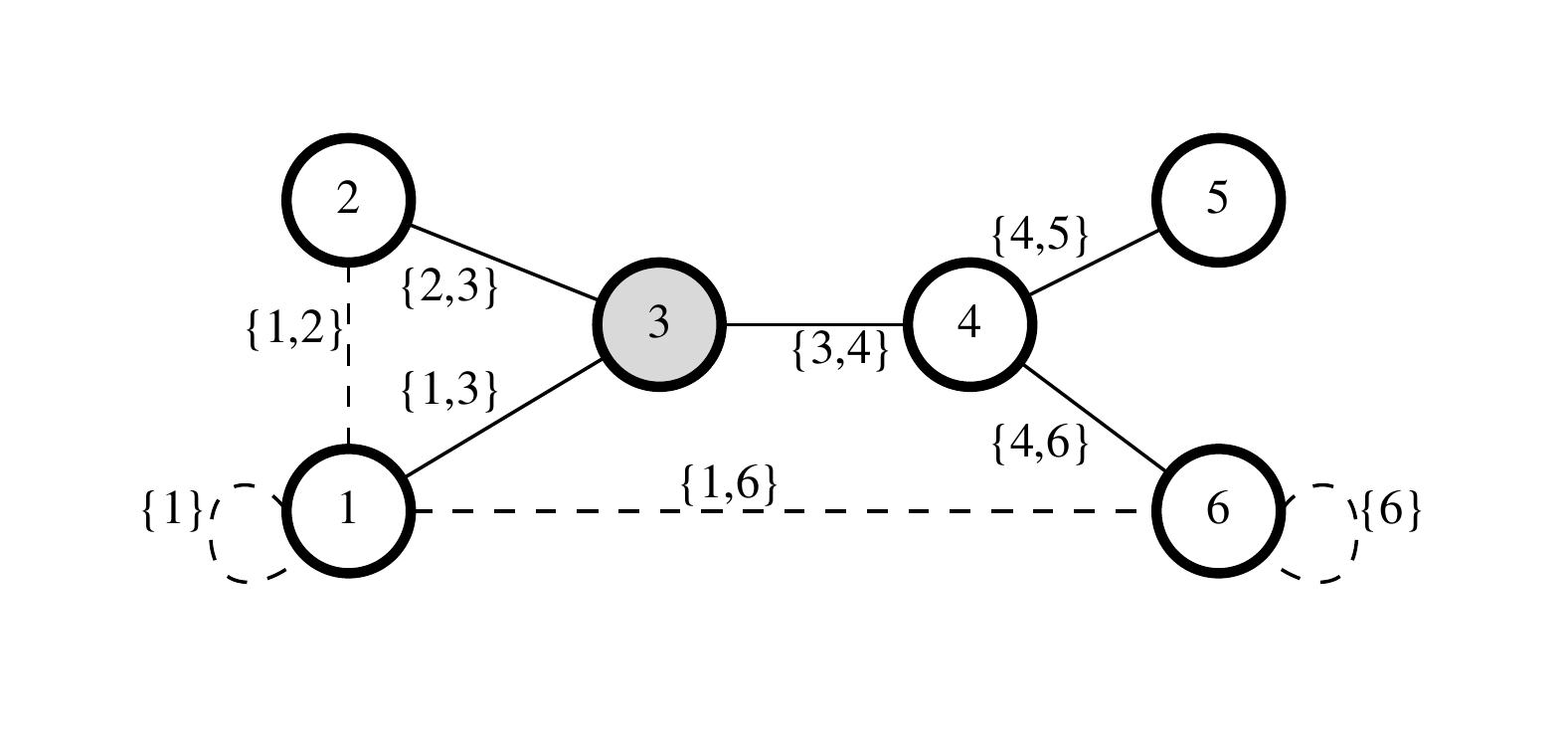}
    \caption{Example graph width six vertices and nine edges. The five
      edges in the spanning tree are shown with solid lines and the
      four edges not in the spanning tree with dashed lines. Vertex
      $3$ is the root vertex of the spanning tree (marked with grey).}
    \label{fig:example1}
  \end{center}
\end{figure}

\begin{table}
  \centering
    \begin{ruledtabular}
      \begin{tabular}{c|ccccccccc}
        ${\bf A}$&$\{1\}$&$\{1,2\}$&$\{1,3\}$&$\{1,6\}$&$\{2,3\}$&$\{3,4\}$&$\{4,5\}$&$\{4,6\}$&$\{6\}$\\\hline
        1&2&1&1&1&0&0&0&0&0 \\
        2&0&1&0&0&1&0&0&0&0 \\
        3&0&0&1&0&1&1&0&0&0 \\
        4&0&0&0&0&0&1&1&1&0 \\
        5&0&0&0&0&0&0&1&0&0 \\
        6&0&0&0&1&0&0&0&1&2 \\\hline\hline
        ${\bf c}_a^T$&$0$&$0$&$-1$&$1$&$0$&$1$&$0$&$-1$&$0$ \\
        ${\bf c}_b^T$&$1$&$0$&$-2$&$0$&$0$&$0$&$0$&$0$&$0$ \\
        ${\bf c}_c^T$&$0$&$1$&$-1$&$0$&$-1$&$0$&$0$&$0$&$0$ \\
        ${\bf c}_d^T$&$0$&$0$&$0$&$0$&$0$&$2$&$0$&$-2$&$1$ \\\hline\hline
        ${\bf y}_a^T$&$0$&$0$&$-1$&$1$&$0$&$1$&$0$&$-1$&$0$ \\
        ${\bf y}_b^T$&$1$&$-1$&$-1$&$0$&$1$&$0$&$0$&$0$&$0$ \\
        ${\bf y}_c^T$&$1$&$0$&$-2$&$0$&$0$&$2$&$0$&$-2$&$1$ \\
        ${\bf y}_d^T$&$0$&$1$&$-1$&$0$&$-1$&$2$&$0$&$-2$&$1$ \\
      \end{tabular}
    \end{ruledtabular}
    \caption{The six uppermost rows show the
      incidence matrix matrix ${\bf A}$ for the graph in
      Fig.~\ref{fig:example1}. The next four rows show the graph
      cycles ${\bf c}_i$ which are also shown graphically in
      Fig.~\ref{fig:cycles}; see Eq. \eqref{eq:cycle} for the
      definition. 
     The four lowermost rows show the basis
      vectors ${\bf y}_i$, also shown graphically in
      Fig.~\ref{fig:basis}; see Eqs. \eqref{eq:yo} and \eqref{eq:ye}
      for the definition. The basis vectors have been constructed
      from the cycles as follows: the basis vector ${\bf y}_a$ by a
      cycle induced by the clean edge, ${\bf y}_a={\bf c}_a$, and the
      remaining three basis vectors ${\bf y}_b,\ldots,{\bf y}_d$ by
      linear combinations of two cycles induced by dirty edges, ${\bf
        y}_b={\bf c}_b-{\bf c}_c$, ${\bf y}_c={\bf c}_b+{\bf c}_d$,
      and ${\bf y}_d={\bf c}_d+{\bf c}_c$. All of the basis vectors
      satisfy ${\bf A}{\bf y}_i={\bf 0}$, and span a 3-dimensional
      null space ${\rm Null}({\bf A})$, as required by Theorem
      \ref{theorem:complete}; the basis vectors constructed using
      dirty edges have a linear dependence given by ${\bf y}_d={\bf
        y}_c-{\bf y}_b$.     \label{tab:example1}}
\end{table}

\begin{figure}
  \begin{center}
    \begin{tabular}[b]{c}
      \includegraphics[width=0.48\textwidth]{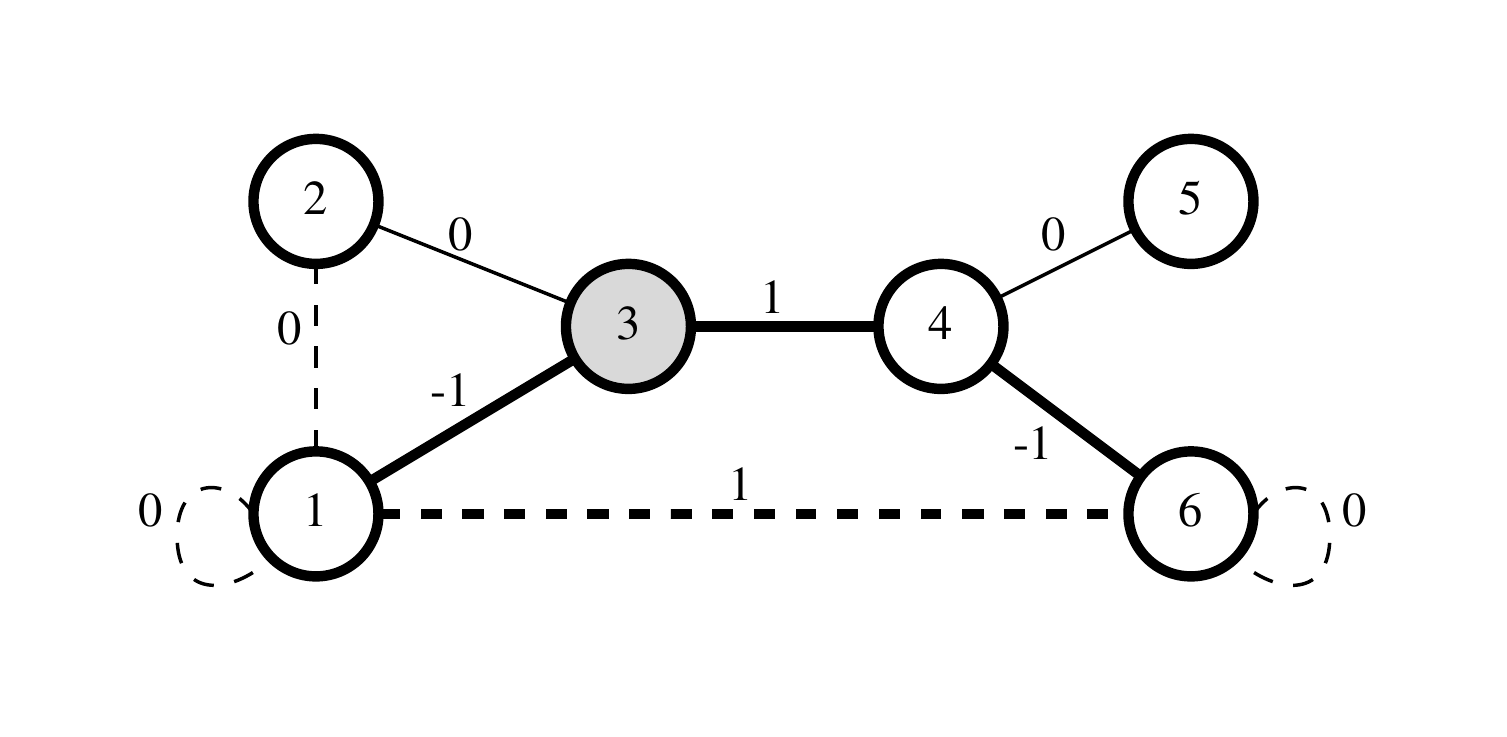}\\
      \small (a)
    \end{tabular}
    \begin{tabular}[b]{c}
      \includegraphics[width=0.48\textwidth]{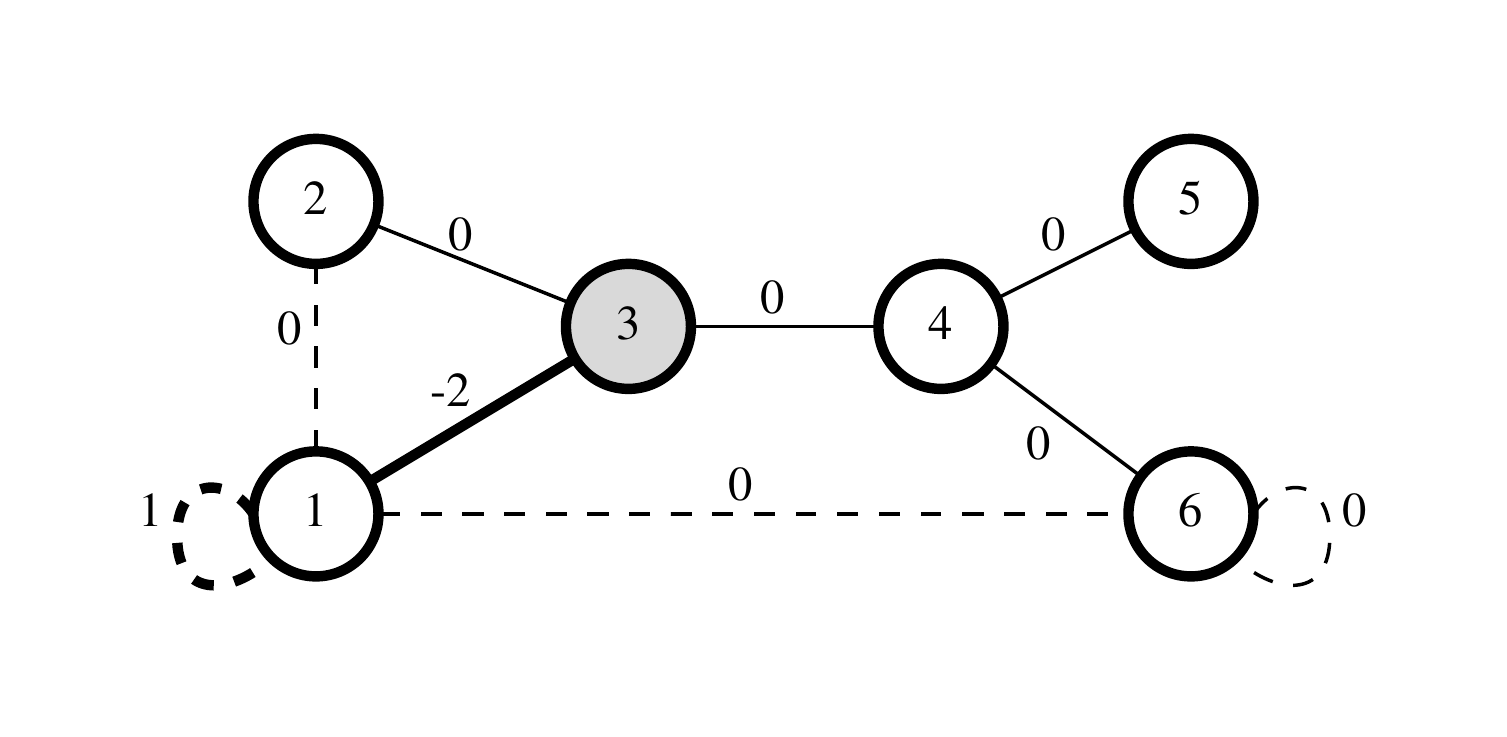}\\
      \small (b)
    \end{tabular}
    \begin{tabular}[b]{c}
      \includegraphics[width=0.48\textwidth]{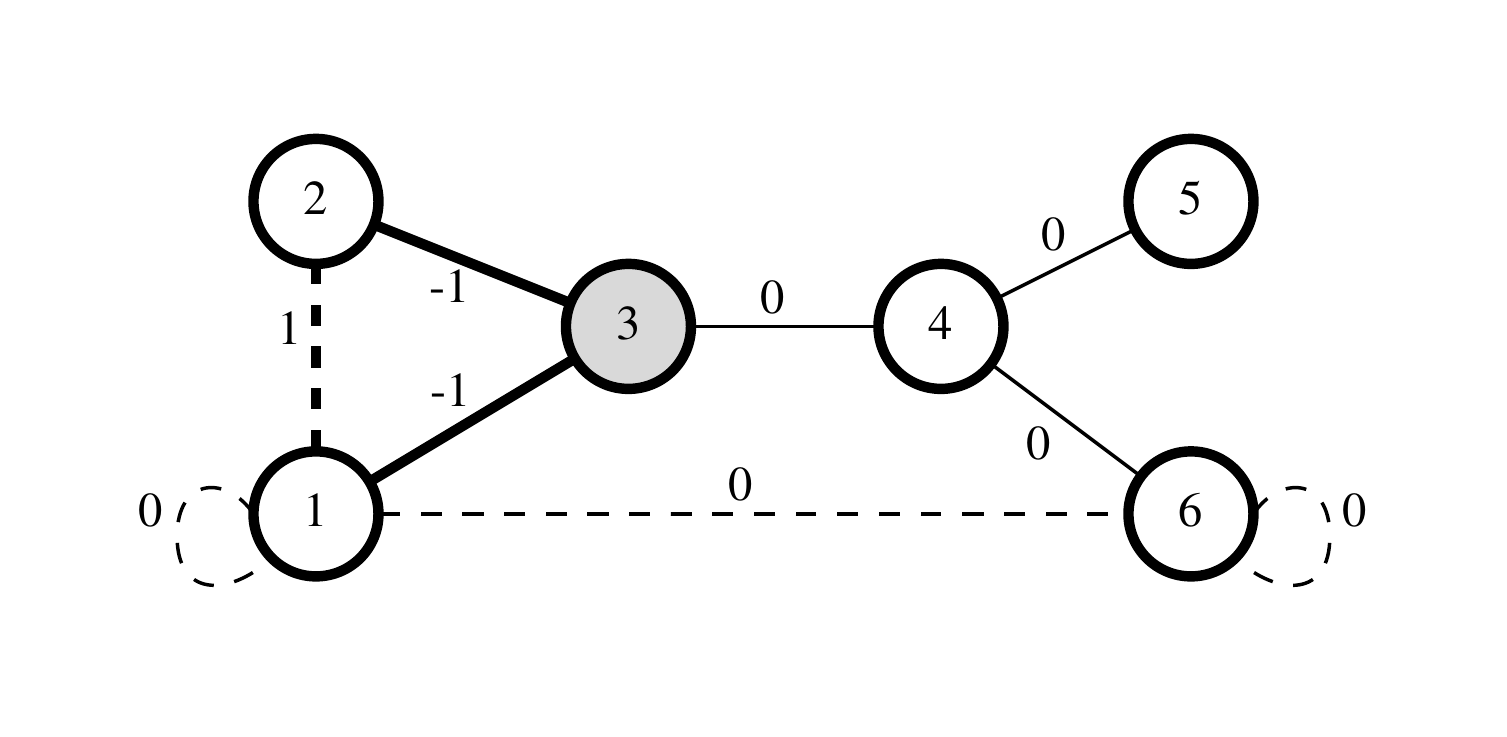}\\
      \small (c)
    \end{tabular}
    \begin{tabular}[b]{c}
      \includegraphics[width=0.48\textwidth]{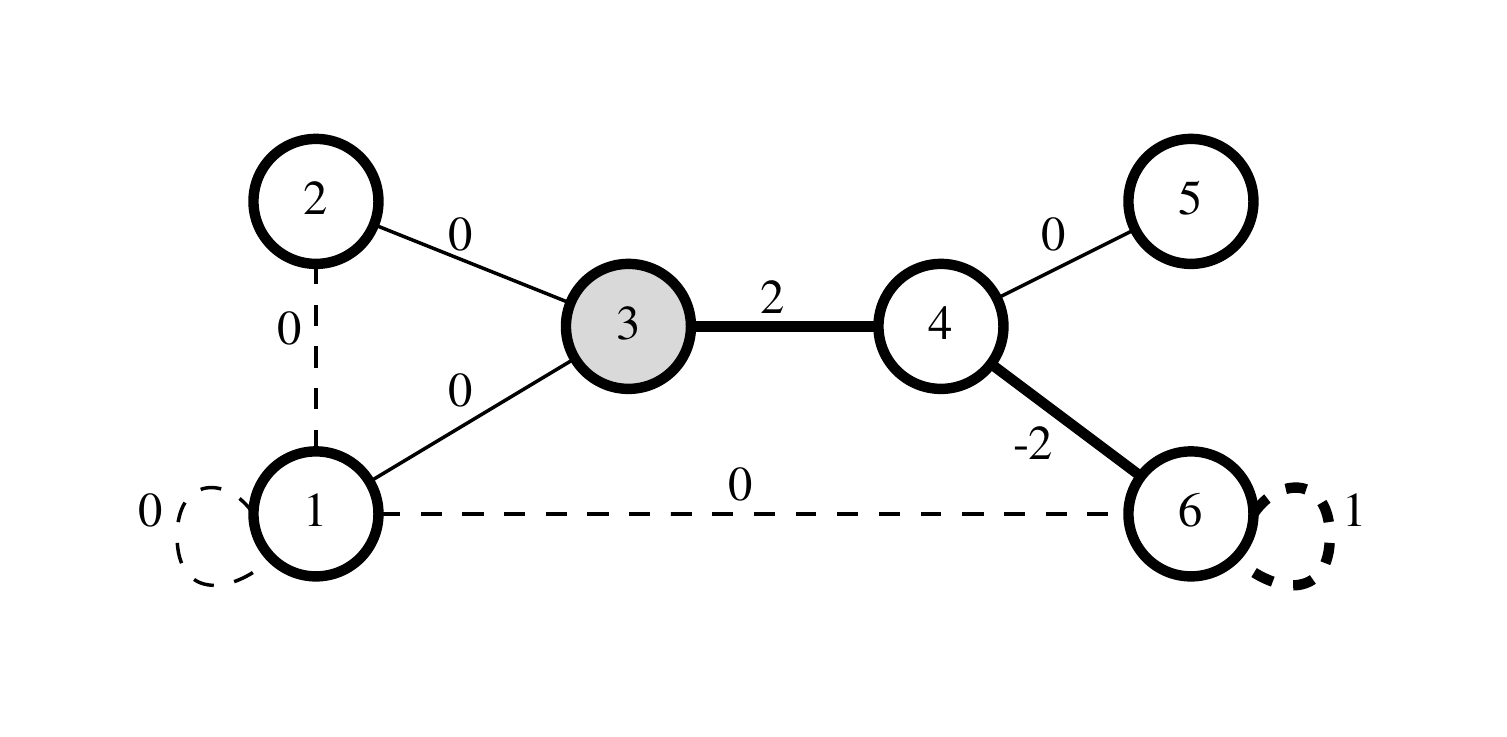}\\
      \small (d)
    \end{tabular}
    \caption{\label{fig:cycles} The cycles of the graph in Fig.
      \ref{fig:example1} and the corresponding values of the
      respective vector ${\bf c}\in{\mathbb{R}}^{|E|}$. There is one
      cycle related to a clean edge (${\bf c}(\{1,6\})$ in (a)) and
      three cycles related to dirty edges: ${\bf c}(\{1\})$ in (b),
      ${\bf c}(\{1,2\})$ in (c) and ${\bf c}(\{6\})$ in (d).}
  \end{center}
\end{figure}

\begin{figure}
  \begin{center}
    \begin{tabular}[b]{c}
      \includegraphics[width=0.48\textwidth]{example_c16}\\
      \small (a)
    \end{tabular}
    \begin{tabular}[b]{c}
      \includegraphics[width=0.48\textwidth]{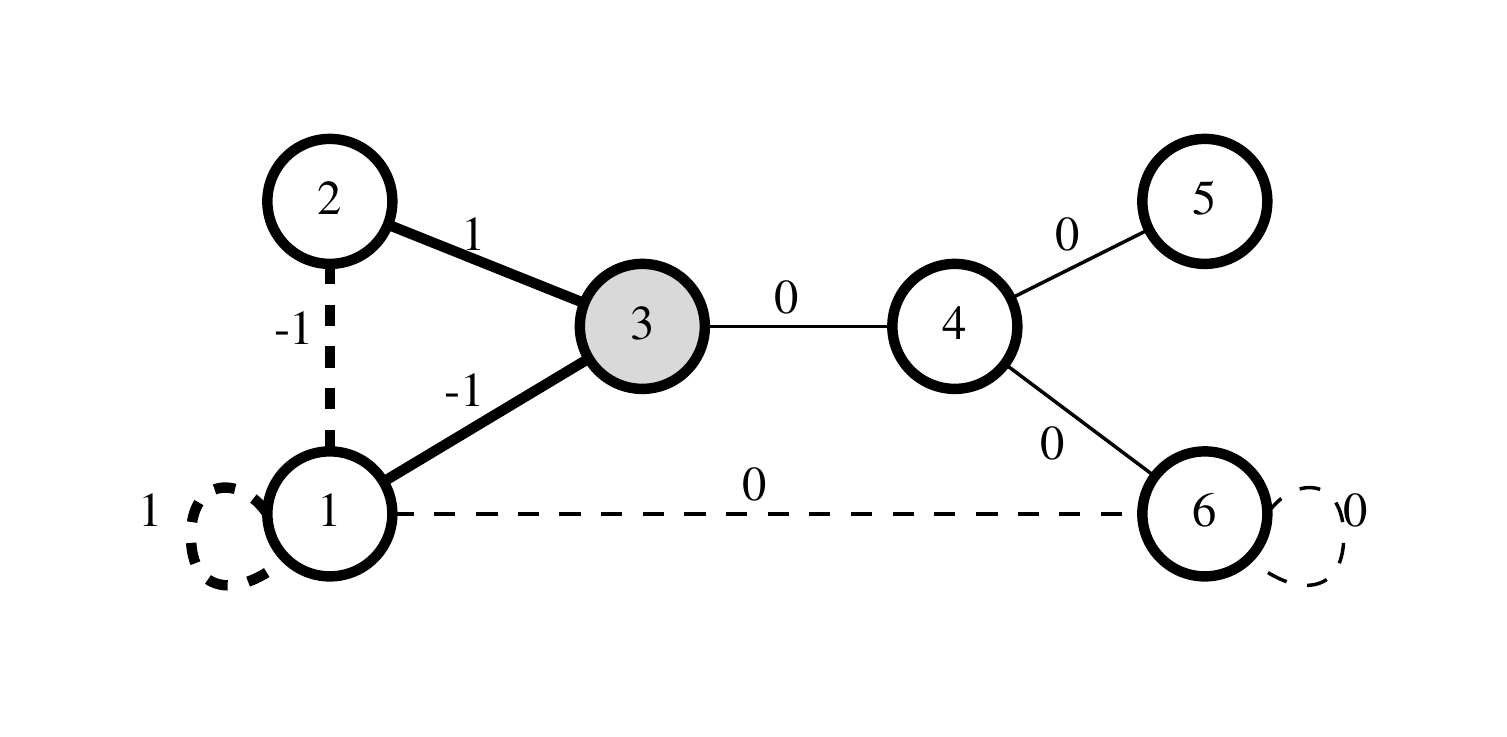}\\
      \small (b)
    \end{tabular}
    \begin{tabular}[b]{c}
      \includegraphics[width=0.48\textwidth]{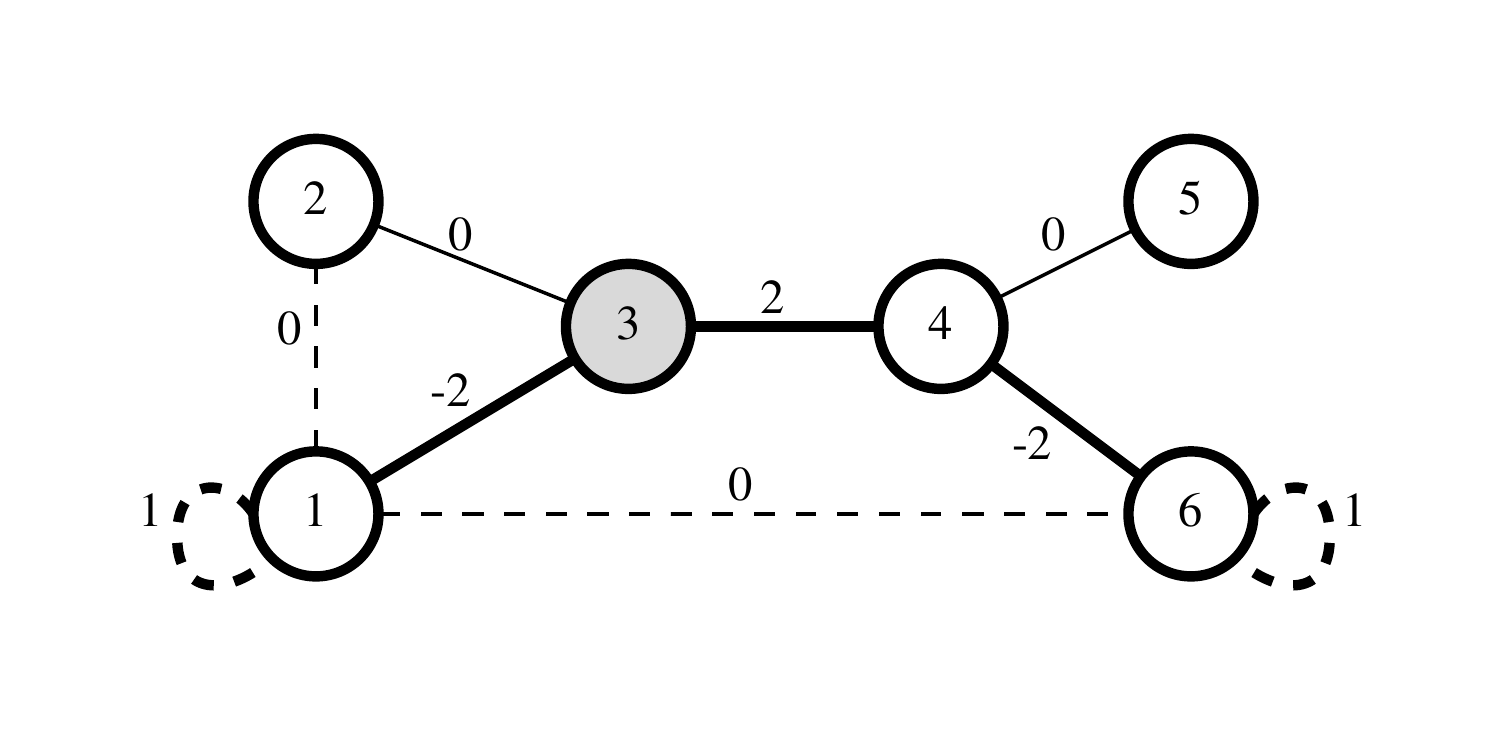}\\
      \small (c)
    \end{tabular}
    \begin{tabular}[b]{c}
      \includegraphics[width=0.48\textwidth]{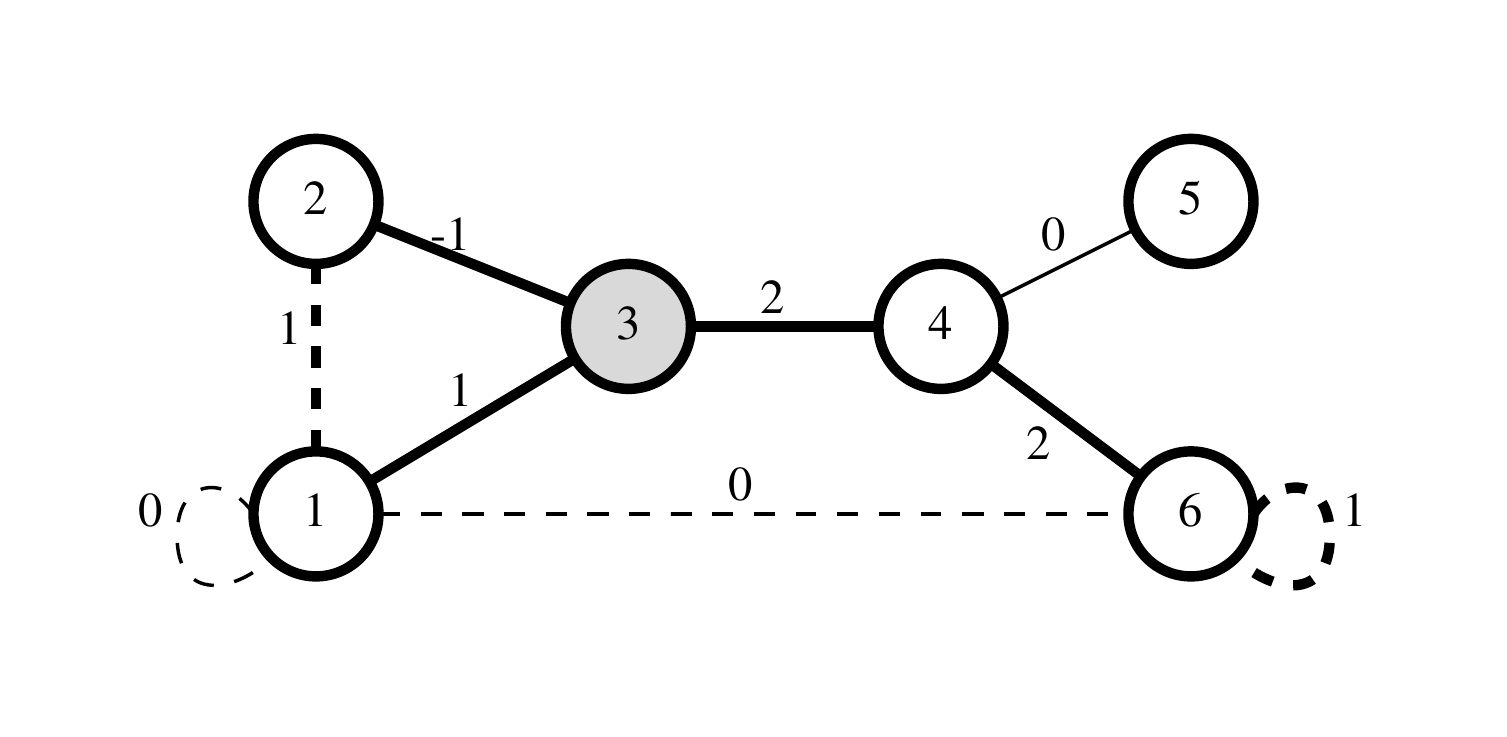}\\
      \small (d)
    \end{tabular}
    \caption{\label{fig:basis} The basis constructed from the cycles
      in Fig.~\ref{fig:cycles} and the corresponding values of the
      respective vector ${\bf y}_i\in{\mathbb{R}}^{|E|}$. There is one
      basis vector corresponding to the clean edge $\{1,6\}$ in (a),
      and three basis vectors corresponding to pairs of dirty edges:
      $\{1\}$ and $\{1,2\}$ in (b), $\{1\}$ and $\{6\}$ in (c), and
      $\{1,2\}$ and $\{6\}$ in (d). Each of these basis vectors
      multiplied by matrix ${\bf A}$ of Tab.~\ref{tab:example1} yields
      zero and therefore the basis vectors are in the null space ${\rm
        Null}({\bf A})$. The basis vectors above are also given by the
      bottom rows of Tab.~\ref{tab:example1}. All of the introduced
      graph cycles defined by edges with non-zero weights are of even
      length: (a) $1-3-4-6(-1-\ldots)$, (b) $1-1-2-3(-1-\ldots)$, (c)
      $1-1-3-4-6-6-4-3(-1-\ldots)$, and (d)
      $1-2-3-4-6-6-4-3(-1-\ldots)$; notice that edges with the weight
      of $\pm 2$ are traversed twice in a graph cycle, once in each
      direction.}
  \end{center}
\end{figure}

\section{Experimental Evaluation}
To demonstrate the scalability of the method described in this paper
we perform two sampling experiments on seven publicly available sparse
real-world networks. These networks are all examples of
\emph{recommendation datasets} (note that this does not limit the
generality of the discussion). In recommendation data, a \emph{user}
provides a \emph{rating} for a given \emph{item}, i.e., the data items
are triplets of the form (user, item, rating). The properties of the
networks are presented in Table~\ref{tab:datasets}. The table shows
the dimensions of the networks in terms of the number of rows (users)
and columns (items) in the data matrix. The density of all networks is
very low, meaning that the networks are sparse. The table also shows
the number of edges and vertices in the network and the dimensionality
of the null space.

We performed the following preprocessing steps for the networks. (i)
Duplicated edges were removed. (ii) For \texttt{Last.fm} and
\texttt{Tasteprofile} all ratings above 2500 and 20, respectively,
were discarded. (iii) For \texttt{BookCrossing} we only used explicit
ratings (i.e., nonzero ratings), rows with book id:s containing the
symbol `?'~were discarded and the symbols `\textbackslash', `=' as
well as blank spaces were removed from the book id:s. For each
network, the edge weights $w(e)$ were scaled to the interval $(0, 1)$.

When investigating varying properties of recommendation datasets it
makes sense to place certain restrictions on (i) the sum of the
ratings given by a user to all items and (ii) the sum of all ratings
received by an item from all users.

In the \textbf{first experiment} we sample networks where the vertex
weights are preserved exactly, while the edge weights $w(e)$ are
allowed to vary on the interval $(0, 1)$, i.e., the range of the edge
weights in the original network. In the context of the recommender
systems this means that the total ratings given by a user and the
total ratings received by an item are both preserved exactly (i.e.,
the ratings for a given user are just allocated differently).

In the \textbf{second experiment} we sample networks where both edge
and vertex weights are allowed to vary. The edge weights are again
constrained to the interval $(0, 1)$ while the vertex weights $W(v)$
are constrained to the interval $\lbrack 0.9 \cdot W(v), 1.1 \cdot
W(v)\rbrack$ for each vertex $v \in V$ in the original network. In the
context of recommender systems this means that the total ratings given
by a user and the total ratings received by an item cannot vary more
than $\pm 10\%$ from the value observed in the original dataset.

In both experiments we consider the scalability of the sampling method
presented in this paper in terms of the convergence rate of the
sampler. Studying the convergence of a Markov chain is a nontrivial
problem and we here consider the convergence in terms of how the
Frobenius norm
\begin{equation}
  {\|\mathbf{w} - \mathbf{w}_j^* \|}_F = {\left( \sum_{i=1}^{n} {\left( \mathbf{w}(i) - \mathbf{w}^*_j(i) \right)}^2 \right)}^{1/2}
\end{equation}
between the edge weight vector of the observed network ($\mathbf{w}$)
and the $j$th sample from the sampler ($\mathbf{w}_j^*$) evolves. The
sampler presented in this paper is implemented in C, as an extension
to R \cite{Rproject} and is freely available for download from
\url{http://github.com/edahelsinki/cyclesampler}.

\subsection{Results}
In the experiments we set a target of 100 000 samples with a sampling
time cut-off of 48 hours. The full number of samples was in both
experiments obtained for all networks except for \texttt{MovieLens
  20M} and \texttt{TasteProfile}. For these networks we obtained 13
900 and 4 500 samples in the first experiment and 12 550 and 4 050
samples in the second experiment, respectively.

The initialisation and sampling times (both in seconds) of the sampler
are presented in Table~\ref{tab:datasets}, recorded on a standard
laptop equipped with a quad-core \unit[2.6]{GHz} Intel Core i7
processor and \unit[20]{Gb} of RAM, running a 64-bit version of R
(v. 3.5.1) on Linux. The initialisation time ($t_\mathrm{init}$) is
the time required to set up the sampler, which consists of determining
the spanning tree and identifying the cycles. The sampling time
($t_\mathrm{sample}$) for a particular network is the time required to
take a number of steps equal to the dimensionality of its null space
$|C|$, i.e., $|E| - |V| + 1$ in experiment 1 and $|E| + 1$ in
experiment 2 (all of our networks are bipartite). The subscripts
\emph{1} and \emph{2} are used to denote the initialisation times for
experiment 1 and 2, respectively. The initialisation and sampling
times increase as the dimensionality of the null space of the network
increases. This is also reflected in the initialisation and sampling
times for experiment 2, where the dimensionality of the null space is
higher due to the addition of one self-loop per vertex required to
preserve vertex weights on an interval (see
Section~\ref{sec:vertexconstraints}).

The initialisation time is about a second for small networks
(\texttt{Last.fm}, \texttt{MovieLens 100k}) and less than 10 minutes
even for the \texttt{TasteProfile} network with tens of millions of
edges. Similarly, the time needed to produce a sample ranges from a
fraction of a second for the small networks to about 1.5 minutes for
the largest network.

The convergence results from experiments 1 and 2 are shown in
Figure~\ref{fig:convergence}, showing the evolution of the Frobenius
norm between the starting state and the sampled state.  The Frobenius
norm is normalised, for each network, to the interval $\lbrack0,
1\rbrack$ so that $0$ corresponds to the starting state and $1$ to the
maximum value of the norm. (The curves are jittered in the vertical
direction for visualisation purposes to prevent overplotting.) Also,
for visualisation purposes the curves are plotted on a logarithmic
scale, and the plotted points are taken at logarithmically spaced
intervals. One step on the $x$-axis corresponds to a number of steps
equal to the the dimensionality of the null space of a given network,
which facilitates comparisons between the different networks.

The results from experiment 1, where the vertex weights are preserved
exactly, are shown in Figure~\ref{fig:convergence}(a). We notice that
in the order of 1 000$|C|$ steps are needed for the sampler to
converge for all datasets except for \texttt{Last.fm}, which has not
converged after 100 000$|C|$ steps.

The results from experiment 2, where the vertex weights are preserved
on an interval, are shown in Figure~\ref{fig:convergence}(b). The
sampler clearly converges more slowly for all datasets than in
experiment 1; the number of steps required for convergence appears to
be in the order of 10 000$|C|$ steps, which is approximately a tenfold
increase in number of steps compared to experiment 1. Here the sampler
has not yet converged for \texttt{Last.fm}, \texttt{MovieLens 20M} and
\texttt{TasteProfile}.

\begingroup
\squeezetable
\begin{table}[ht]
  \centering
  \caption{Properties of the networks. The networks are sorted in
    order of an increasing number of edges. The columns are as
    follows: \emph{rows} and \emph{columns} give the full size of the
    data matrix and \emph{density} is the number of nonzero
    entries. The number of edges, vertices and the dimensionality of
    the null space (in experiment 1) are given by $|E|$, $|V|$ and
    \emph{$|C| = |E|-|V|+1$}, respectively. In experiment 2 the
    dimensionality of the null space is $|E|+1$ due to the addition of
    one self-loop per vertex. The initialisation time for the sampler
    (e.g., finding the spanning tree and enumerating cycles) and the
    time needed to take a number of steps equal to the dimensionality
    of the null space of the network are shown in the columns
    $t_\mathrm{init}$ and $t_\mathrm{sample}$. The times are in
    seconds and the subscript \emph{1} refers to experiment 1 whereas
    the subscript \emph{2} refers to experiment 2.}
  \label{tab:datasets}
  \begin{ruledtabular}
    \begin{tabular}{r cccccccccc}
      & \textbf{rows} & \textbf{columns} & \textbf{density} & $\boldsymbol{|E|}$  & $\boldsymbol{|V|}$ & $\boldsymbol{|C|}$ & $\mathrm{\textbf{t}}_\mathrm{\textbf{init,1}}$ & $\mathrm{\textbf{t}}_\mathrm{\textbf{sample,1}}$ & $\mathrm{\textbf{t}}_\mathrm{\textbf{init,2}}$ & $\mathrm{\textbf{t}}_\mathrm{\textbf{sample,2}}$ \\
      \hline
      \textbf{Last.fm ${}^A$}        & $1.89 \times 10^{3}$ & $1.74 \times 10^{4}$ & $2.69 \times 10^{-3}$ & $8.86 \times 10^{4}$ & $1.93 \times 10^{4}$ & $6.93 \times 10^{4}$ & 0.85   & 0.03  & 1.09   & 0.05 \\ 
      \textbf{MovieLens 100k ${}^B$} & $9.43 \times 10^{2}$ & $1.68 \times 10^{3}$ & $6.30 \times 10^{-2}$ & $1.00 \times 10^{5}$ & $2.62 \times 10^{3}$ & $9.74 \times 10^{4}$ & 0.72   & 0.04  & 0.79   & 0.05 \\ 
      \textbf{BookCrossing ${}^C$}   & $7.78 \times 10^{4}$ & $1.86 \times 10^{5}$ & $3.00 \times 10^{-5}$ & $4.34 \times 10^{5}$ & $2.64 \times 10^{5}$ & $1.85 \times 10^{5}$ & 36.88  & 0.20  & 58.82  & 0.66 \\ 
      \textbf{FineFoods ${}^D$}      & $2.56 \times 10^{5}$ & $7.43 \times 10^{4}$ & $2.95 \times 10^{-5}$ & $5.61 \times 10^{5}$ & $3.30 \times 10^{5}$ & $2.53 \times 10^{5}$ & 62.83  & 0.27  & 91.59  & 0.84 \\ 
      \textbf{MovieLens 1M ${}^B$}   & $6.04 \times 10^{3}$ & $3.71 \times 10^{3}$ & $4.47 \times 10^{-2}$ & $1.00 \times 10^{6}$ & $9.75 \times 10^{3}$ & $9.90 \times 10^{5}$ & 7.48   & 0.61  & 8.12   & 0.67 \\ 
      \textbf{MovieLens 20M ${}^B$}  & $1.38 \times 10^{5}$ & $2.67 \times 10^{4}$ & $5.40 \times 10^{-3}$ & $2.00 \times 10^{7}$ & $1.65 \times 10^{5}$ & $1.98 \times 10^{7}$ & 164.05 & 23.41 & 165.05 & 15.40 \\ 
      \textbf{TasteProfile ${}^E$}   & $1.02 \times 10^{6}$ & $3.84 \times 10^{5}$ & $1.22 \times 10^{-4}$ & $4.77 \times 10^{7}$ & $1.40 \times 10^{6}$ & $4.63 \times 10^{7}$ & 479.00 & 85.45 & 573.51 & 80.20 \\ 
    \end{tabular}
  \end{ruledtabular}
  \begin{tabular}{ll}
    ${}^A$ & \url{http://files.grouplens.org/datasets/hetrec2011/hetrec2011-lastfm-readme.txt},  \url{http://www.lastfm.com} and \cite{Cantador:RecSys2011} \\
    ${}^B$ & \url{http://grouplens.org/datasets/movielens/} and \cite{harper2016movielens} \\
    ${}^C$ & \url{http://www2.informatik.uni-freiburg.de/~cziegler/BX/} and \cite{ziegler2005improving}\\
    ${}^D$ & \url{https://snap.stanford.edu/data/web-FineFoods.html} and \cite{mcauley2013amateurs}  \\
    ${}^E$ &  \url{http://labrosa.ee.columbia.edu/millionsong/tasteprofile} and \cite{Bertin-Mahieux2011} \\
  \end{tabular}
\end{table}
\endgroup

\begin{figure}
  \begin{center}
    \begin{tabular}[b]{c}
      \includegraphics[width=0.48\textwidth]{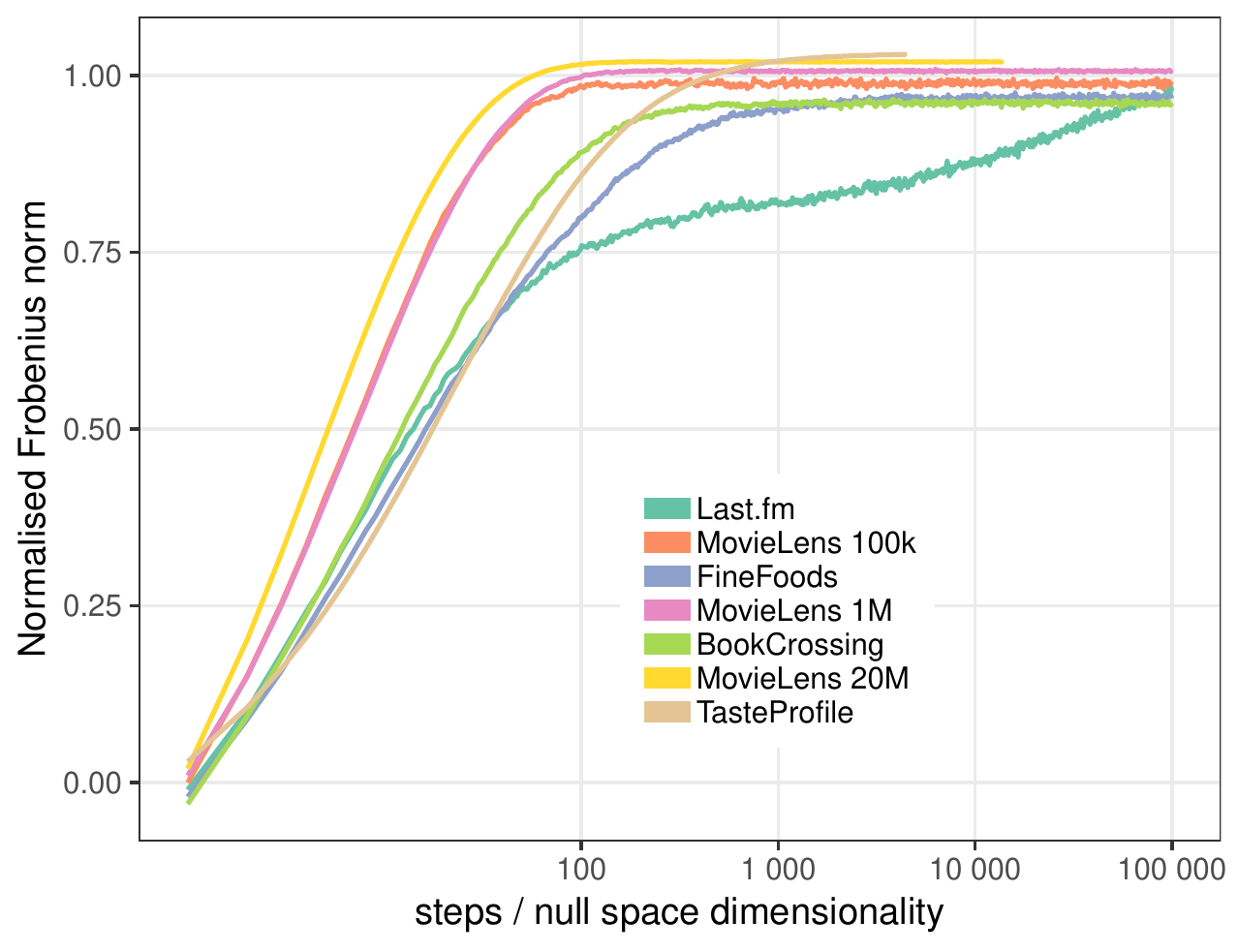}\\
      \small (a)
    \end{tabular}
    \begin{tabular}[b]{c}
      \includegraphics[width=0.48\textwidth]{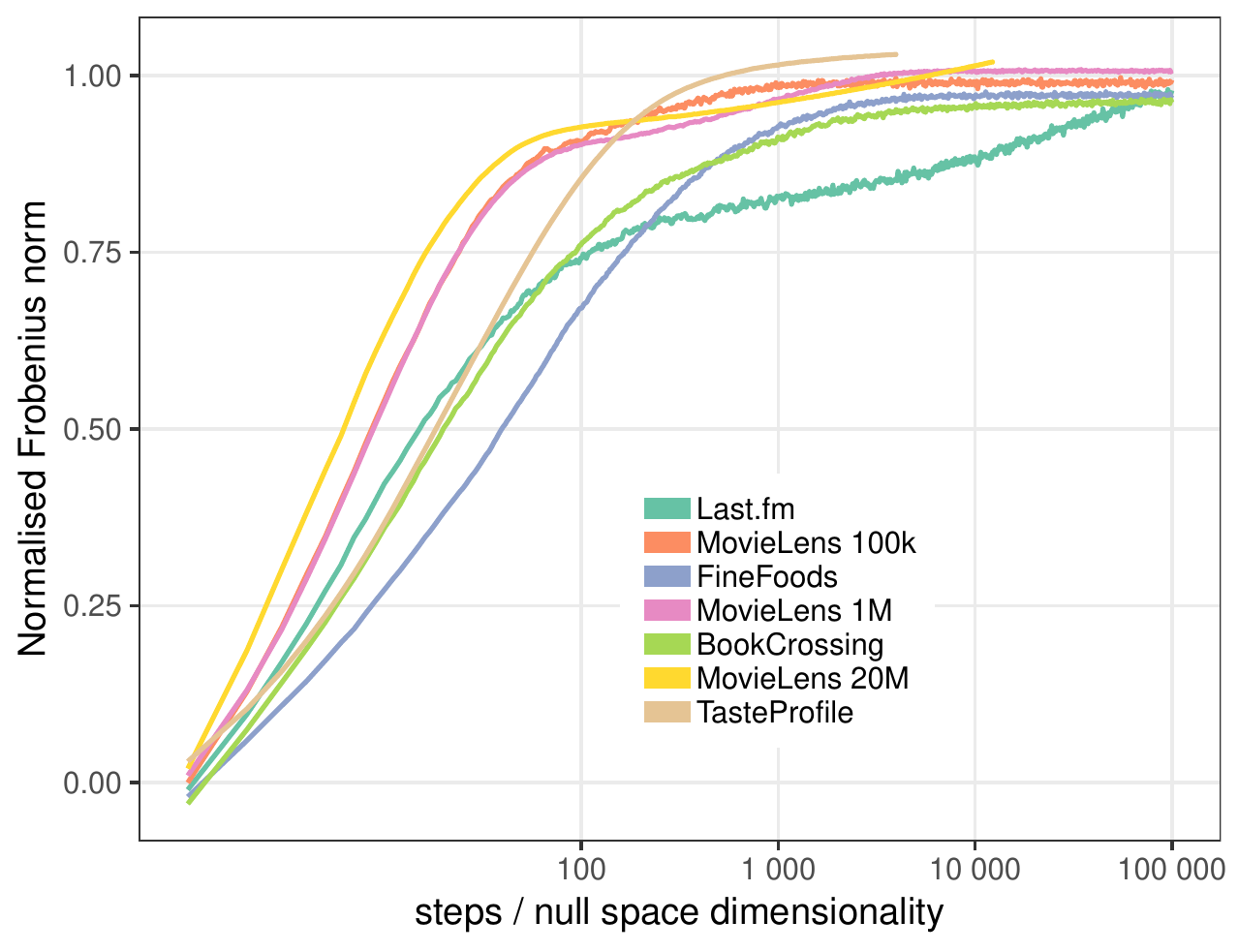}\\
      \small (b)
    \end{tabular}
    \caption{Evolution of the Frobenius norm between the starting
      state and the current state of the sampler. Figure (a) shows the
      results for experiment 1, where vertex weights are preserved
      exactly. Figure (b) shows the results for experiment 2, where
      vertex weights are allowed to vary on an interval.}
    \label{fig:convergence}
  \end{center}
\end{figure}

\section{Conclusions}
Sampling networks has important applications in multiple domains where
it is required to investigate and understand the significance of
different phenomena described by the structure of the network. Many
such networks are often sparse and large and consequently generating
surrogate networks adhering to specific constraints is a difficult
problem. In this paper we introduced \texttt{CycleSampler}; a novel
Markov chain Monte Carlo method that allows sampling of both
undirected and directed networks with interval constraints on both
edge and node weights.

The presented method provides an efficient means for sampling large
networks and we provided an empirical evaluation demonstrating that
the method scales to large sparse real-life networks. We believe that
the \texttt{CycleSampler}-method has applications in many domains and
we also release an open-source implementation of the method as an
R-package.

\acknowledgments{This work was funded by the Academy of Finland
  (decisions 326280 and 326339) and Tekes (Revolution of Knowledge
  Work). We acknowledge the computational resources provided by the
  Finnish Grid and Cloud Infrastructure \cite{fcgi} (persistent
  identifier urn:nbn:fi:research-infras-2016072533).}

\bibliographystyle{apsrev4-1}
\bibliography{paper}

%merlin.mbs apsrev4-1.bst 2010-07-25 4.21a (PWD, AO, DPC) hacked
%Control: key (0)
%Control: author (72) initials jnrlst
%Control: editor formatted (1) identically to author
%Control: production of article title (-1) disabled
%Control: page (0) single
%Control: year (1) truncated
%Control: production of eprint (0) enabled
\begin{thebibliography}{23}%
\makeatletter
\providecommand \@ifxundefined [1]{%
 \@ifx{#1\undefined}
}%
\providecommand \@ifnum [1]{%
 \ifnum #1\expandafter \@firstoftwo
 \else \expandafter \@secondoftwo
 \fi
}%
\providecommand \@ifx [1]{%
 \ifx #1\expandafter \@firstoftwo
 \else \expandafter \@secondoftwo
 \fi
}%
\providecommand \natexlab [1]{#1}%
\providecommand \enquote  [1]{``#1''}%
\providecommand \bibnamefont  [1]{#1}%
\providecommand \bibfnamefont [1]{#1}%
\providecommand \citenamefont [1]{#1}%
\providecommand \href@noop [0]{\@secondoftwo}%
\providecommand \href [0]{\begingroup \@sanitize@url \@href}%
\providecommand \@href[1]{\@@startlink{#1}\@@href}%
\providecommand \@@href[1]{\endgroup#1\@@endlink}%
\providecommand \@sanitize@url [0]{\catcode `\\12\catcode `\$12\catcode
  `\&12\catcode `\#12\catcode `\^12\catcode `\_12\catcode `\%12\relax}%
\providecommand \@@startlink[1]{}%
\providecommand \@@endlink[0]{}%
\providecommand \url  [0]{\begingroup\@sanitize@url \@url }%
\providecommand \@url [1]{\endgroup\@href {#1}{\urlprefix }}%
\providecommand \urlprefix  [0]{URL }%
\providecommand \Eprint [0]{\href }%
\providecommand \doibase [0]{http://dx.doi.org/}%
\providecommand \selectlanguage [0]{\@gobble}%
\providecommand \bibinfo  [0]{\@secondoftwo}%
\providecommand \bibfield  [0]{\@secondoftwo}%
\providecommand \translation [1]{[#1]}%
\providecommand \BibitemOpen [0]{}%
\providecommand \bibitemStop [0]{}%
\providecommand \bibitemNoStop [0]{.\EOS\space}%
\providecommand \EOS [0]{\spacefactor3000\relax}%
\providecommand \BibitemShut  [1]{\csname bibitem#1\endcsname}%
\let\auto@bib@innerbib\@empty
%</preamble>
\bibitem [{\citenamefont {Fortunato}(2010)}]{fortunato:2010:a}%
  \BibitemOpen
  \bibfield  {author} {\bibinfo {author} {\bibfnamefont {S.}~\bibnamefont
  {Fortunato}},\ }\href@noop {} {\bibfield  {journal} {\bibinfo  {journal}
  {Physics Reports}\ }\textbf {\bibinfo {volume} {486}},\ \bibinfo {pages} {75}
  (\bibinfo {year} {2010})}\BibitemShut {NoStop}%
\bibitem [{\citenamefont {Ansmann}\ and\ \citenamefont
  {Lehnertz}(2011)}]{ansmann2011constrained}%
  \BibitemOpen
  \bibfield  {author} {\bibinfo {author} {\bibfnamefont {G.}~\bibnamefont
  {Ansmann}}\ and\ \bibinfo {author} {\bibfnamefont {K.}~\bibnamefont
  {Lehnertz}},\ }\href@noop {} {\bibfield  {journal} {\bibinfo  {journal}
  {Physical Review E}\ }\textbf {\bibinfo {volume} {84}},\ \bibinfo {pages}
  {026103} (\bibinfo {year} {2011})}\BibitemShut {NoStop}%
\bibitem [{\citenamefont {Hao}\ \emph {et~al.}(2012)\citenamefont {Hao},
  \citenamefont {Ren},\ and\ \citenamefont {Li}}]{hao:2012:a}%
  \BibitemOpen
  \bibfield  {author} {\bibinfo {author} {\bibfnamefont {D.}~\bibnamefont
  {Hao}}, \bibinfo {author} {\bibfnamefont {C.}~\bibnamefont {Ren}}, \ and\
  \bibinfo {author} {\bibfnamefont {C.}~\bibnamefont {Li}},\ }\href@noop {}
  {\bibfield  {journal} {\bibinfo  {journal} {BMC Systems Biology}\ }\textbf
  {\bibinfo {volume} {6}} (\bibinfo {year} {2012})}\BibitemShut {NoStop}%
\bibitem [{\citenamefont {Joyce}\ \emph {et~al.}(2010)\citenamefont {Joyce},
  \citenamefont {Laurienti}, \citenamefont {Burdette},\ and\ \citenamefont
  {Hayasaka}}]{joyce:2010:a}%
  \BibitemOpen
  \bibfield  {author} {\bibinfo {author} {\bibfnamefont {K.~E.}\ \bibnamefont
  {Joyce}}, \bibinfo {author} {\bibfnamefont {P.~J.}\ \bibnamefont
  {Laurienti}}, \bibinfo {author} {\bibfnamefont {J.~H.}\ \bibnamefont
  {Burdette}}, \ and\ \bibinfo {author} {\bibfnamefont {S.}~\bibnamefont
  {Hayasaka}},\ }\href@noop {} {\bibfield  {journal} {\bibinfo  {journal} {PLoS
  One}\ }\textbf {\bibinfo {volume} {5}},\ \bibinfo {pages} {e12200} (\bibinfo
  {year} {2010})}\BibitemShut {NoStop}%
\bibitem [{\citenamefont {Backstrom}\ \emph {et~al.}(2012)\citenamefont
  {Backstrom}, \citenamefont {Boldi}, \citenamefont {Rosa}, \citenamefont
  {Ugander},\ and\ \citenamefont {Vigna}}]{BackstromBRUV12}%
  \BibitemOpen
  \bibfield  {author} {\bibinfo {author} {\bibfnamefont {L.}~\bibnamefont
  {Backstrom}}, \bibinfo {author} {\bibfnamefont {P.}~\bibnamefont {Boldi}},
  \bibinfo {author} {\bibfnamefont {M.}~\bibnamefont {Rosa}}, \bibinfo {author}
  {\bibfnamefont {J.}~\bibnamefont {Ugander}}, \ and\ \bibinfo {author}
  {\bibfnamefont {S.}~\bibnamefont {Vigna}},\ }in\ \href@noop {} {\emph
  {\bibinfo {booktitle} {Web Science 2012, WebSci '12, Evanston, IL, {USA} -
  June 22 - 24, 2012}}}\ (\bibinfo {year} {2012})\ pp.\ \bibinfo {pages}
  {33--42}\BibitemShut {NoStop}%
\bibitem [{\citenamefont {Chen}\ \emph {et~al.}(2013)\citenamefont {Chen},
  \citenamefont {Lakshmanan},\ and\ \citenamefont {Castillo}}]{chen:2013:a}%
  \BibitemOpen
  \bibfield  {author} {\bibinfo {author} {\bibfnamefont {W.}~\bibnamefont
  {Chen}}, \bibinfo {author} {\bibfnamefont {L.~V.~S.}\ \bibnamefont
  {Lakshmanan}}, \ and\ \bibinfo {author} {\bibfnamefont {C.}~\bibnamefont
  {Castillo}},\ }\href@noop {} {\emph {\bibinfo {title} {Information and
  Influence Propagation in Social Networks (Synthesis Lectures on Data
  Management)}}}\ (\bibinfo  {publisher} {Morgan \& Claypool Publishers},\
  \bibinfo {year} {2013})\BibitemShut {NoStop}%
\bibitem [{\citenamefont {Coolen}\ \emph {et~al.}(2009)\citenamefont {Coolen},
  \citenamefont {De~Martino},\ and\ \citenamefont
  {Annibale}}]{coolen2009constrained}%
  \BibitemOpen
  \bibfield  {author} {\bibinfo {author} {\bibfnamefont {A.}~\bibnamefont
  {Coolen}}, \bibinfo {author} {\bibfnamefont {A.}~\bibnamefont {De~Martino}},
  \ and\ \bibinfo {author} {\bibfnamefont {A.}~\bibnamefont {Annibale}},\
  }\href@noop {} {\bibfield  {journal} {\bibinfo  {journal} {Journal of
  Statistical Physics}\ }\textbf {\bibinfo {volume} {136}},\ \bibinfo {pages}
  {1035} (\bibinfo {year} {2009})}\BibitemShut {NoStop}%
\bibitem [{\citenamefont {Hanhij{\"a}rvi}\ \emph {et~al.}(2009)\citenamefont
  {Hanhij{\"a}rvi}, \citenamefont {Garriga},\ and\ \citenamefont
  {Puolam{\"a}ki}}]{Hanhijarvi2009graph}%
  \BibitemOpen
  \bibfield  {author} {\bibinfo {author} {\bibfnamefont {S.}~\bibnamefont
  {Hanhij{\"a}rvi}}, \bibinfo {author} {\bibfnamefont {G.~C.}\ \bibnamefont
  {Garriga}}, \ and\ \bibinfo {author} {\bibfnamefont {K.}~\bibnamefont
  {Puolam{\"a}ki}},\ }in\ \href@noop {} {\emph {\bibinfo {booktitle} {Proc
  SDM}}}\ (\bibinfo {year} {2009})\ pp.\ \bibinfo {pages}
  {780--791}\BibitemShut {NoStop}%
\bibitem [{\citenamefont {Roberts}\ and\ \citenamefont
  {Coolen}(2012)}]{roberts2012unbiased}%
  \BibitemOpen
  \bibfield  {author} {\bibinfo {author} {\bibfnamefont {E.}~\bibnamefont
  {Roberts}}\ and\ \bibinfo {author} {\bibfnamefont {A.}~\bibnamefont
  {Coolen}},\ }\href@noop {} {\bibfield  {journal} {\bibinfo  {journal}
  {Physical Review E}\ }\textbf {\bibinfo {volume} {85}},\ \bibinfo {pages}
  {046103} (\bibinfo {year} {2012})}\BibitemShut {NoStop}%
\bibitem [{\citenamefont {Gionis}\ \emph {et~al.}(2007)\citenamefont {Gionis},
  \citenamefont {Mannila}, \citenamefont {Mielik{\"a}inen},\ and\ \citenamefont
  {Tsaparas}}]{Gionis2007}%
  \BibitemOpen
  \bibfield  {author} {\bibinfo {author} {\bibfnamefont {A.}~\bibnamefont
  {Gionis}}, \bibinfo {author} {\bibfnamefont {H.}~\bibnamefont {Mannila}},
  \bibinfo {author} {\bibfnamefont {T.}~\bibnamefont {Mielik{\"a}inen}}, \ and\
  \bibinfo {author} {\bibfnamefont {P.}~\bibnamefont {Tsaparas}},\ }\href@noop
  {} {\bibfield  {journal} {\bibinfo  {journal} {ACM Transactions on Knowledge
  Discovery from Data (TKDD)}\ }\textbf {\bibinfo {volume} {1}},\ \bibinfo
  {pages} {14} (\bibinfo {year} {2007})}\BibitemShut {NoStop}%
\bibitem [{\citenamefont {Squartini}\ and\ \citenamefont
  {Garlaschelli}(2011)}]{squartini2011analytical}%
  \BibitemOpen
  \bibfield  {author} {\bibinfo {author} {\bibfnamefont {T.}~\bibnamefont
  {Squartini}}\ and\ \bibinfo {author} {\bibfnamefont {D.}~\bibnamefont
  {Garlaschelli}},\ }\href@noop {} {\bibfield  {journal} {\bibinfo  {journal}
  {New Journal of Physics}\ }\textbf {\bibinfo {volume} {13}},\ \bibinfo
  {pages} {083001} (\bibinfo {year} {2011})}\BibitemShut {NoStop}%
\bibitem [{\citenamefont {Gilbert}\ and\ \citenamefont
  {Heath}(1987)}]{gilbert1987computing}%
  \BibitemOpen
  \bibfield  {author} {\bibinfo {author} {\bibfnamefont {J.~R.}\ \bibnamefont
  {Gilbert}}\ and\ \bibinfo {author} {\bibfnamefont {M.~T.}\ \bibnamefont
  {Heath}},\ }\href@noop {} {\bibfield  {journal} {\bibinfo  {journal} {SIAM
  Journal on Algebraic Discrete Methods}\ }\textbf {\bibinfo {volume} {8}},\
  \bibinfo {pages} {446} (\bibinfo {year} {1987})}\BibitemShut {NoStop}%
\bibitem [{\citenamefont {Coleman}\ and\ \citenamefont
  {Pothen}(1987)}]{coleman1987null}%
  \BibitemOpen
  \bibfield  {author} {\bibinfo {author} {\bibfnamefont {T.~F.}\ \bibnamefont
  {Coleman}}\ and\ \bibinfo {author} {\bibfnamefont {A.}~\bibnamefont
  {Pothen}},\ }\href@noop {} {\bibfield  {journal} {\bibinfo  {journal} {SIAM
  Journal on Algebraic Discrete Methods}\ }\textbf {\bibinfo {volume} {8}},\
  \bibinfo {pages} {544} (\bibinfo {year} {1987})}\BibitemShut {NoStop}%
\bibitem [{\citenamefont {Coleman}\ and\ \citenamefont
  {Pothen}(1986)}]{coleman1986null}%
  \BibitemOpen
  \bibfield  {author} {\bibinfo {author} {\bibfnamefont {T.~F.}\ \bibnamefont
  {Coleman}}\ and\ \bibinfo {author} {\bibfnamefont {A.}~\bibnamefont
  {Pothen}},\ }\href@noop {} {\bibfield  {journal} {\bibinfo  {journal} {SIAM
  Journal on Algebraic Discrete Methods}\ }\textbf {\bibinfo {volume} {7}},\
  \bibinfo {pages} {527} (\bibinfo {year} {1986})}\BibitemShut {NoStop}%
\bibitem [{\citenamefont {Akbari}\ \emph {et~al.}(2006)\citenamefont {Akbari},
  \citenamefont {Ghareghani}, \citenamefont {Khosrovshahi},\ and\ \citenamefont
  {Maimani}}]{AKBARI2006617}%
  \BibitemOpen
  \bibfield  {author} {\bibinfo {author} {\bibfnamefont {S.}~\bibnamefont
  {Akbari}}, \bibinfo {author} {\bibfnamefont {N.}~\bibnamefont {Ghareghani}},
  \bibinfo {author} {\bibfnamefont {G.~B.}\ \bibnamefont {Khosrovshahi}}, \
  and\ \bibinfo {author} {\bibfnamefont {H.}~\bibnamefont {Maimani}},\
  }\href@noop {} {\bibfield  {journal} {\bibinfo  {journal} {Linear Algebra and
  its Applications}\ }\textbf {\bibinfo {volume} {414}},\ \bibinfo {pages} {617
  } (\bibinfo {year} {2006})}\BibitemShut {NoStop}%
\bibitem [{\citenamefont {Van~den Meersche}\ \emph {et~al.}(2009)\citenamefont
  {Van~den Meersche}, \citenamefont {Soetaert},\ and\ \citenamefont
  {Van~Oevelen}}]{van2009xsample}%
  \BibitemOpen
  \bibfield  {author} {\bibinfo {author} {\bibfnamefont {K.}~\bibnamefont
  {Van~den Meersche}}, \bibinfo {author} {\bibfnamefont {K.}~\bibnamefont
  {Soetaert}}, \ and\ \bibinfo {author} {\bibfnamefont {D.}~\bibnamefont
  {Van~Oevelen}},\ }\href@noop {} {\bibfield  {journal} {\bibinfo  {journal}
  {J. of Statistical Software}\ }\textbf {\bibinfo {volume} {30}} (\bibinfo
  {year} {2009})}\BibitemShut {NoStop}%
\bibitem [{\citenamefont {{R Core Team}}(2018)}]{Rproject}%
  \BibitemOpen
  \bibfield  {author} {\bibinfo {author} {\bibnamefont {{R Core Team}}},\
  }\href {https://www.R-project.org/} {\emph {\bibinfo {title} {R: A Language
  and Environment for Statistical Computing}}},\ \bibinfo {organization} {R
  Foundation for Statistical Computing},\ \bibinfo {address} {Vienna, Austria}
  (\bibinfo {year} {2018})\BibitemShut {NoStop}%
\bibitem [{\citenamefont {Cantador}\ \emph {et~al.}(2011)\citenamefont
  {Cantador}, \citenamefont {Brusilovsky},\ and\ \citenamefont
  {Kuflik}}]{Cantador:RecSys2011}%
  \BibitemOpen
  \bibfield  {author} {\bibinfo {author} {\bibfnamefont {I.}~\bibnamefont
  {Cantador}}, \bibinfo {author} {\bibfnamefont {P.}~\bibnamefont
  {Brusilovsky}}, \ and\ \bibinfo {author} {\bibfnamefont {T.}~\bibnamefont
  {Kuflik}},\ }in\ \href@noop {} {\emph {\bibinfo {booktitle} {Proceedings of
  the 5th ACM conference on Recommender systems}}},\ \bibinfo {series and
  number} {RecSys 2011}\ (\bibinfo  {publisher} {ACM},\ \bibinfo {address} {New
  York, NY, USA},\ \bibinfo {year} {2011})\BibitemShut {NoStop}%
\bibitem [{\citenamefont {Harper}\ and\ \citenamefont
  {Konstan}(2016)}]{harper2016movielens}%
  \BibitemOpen
  \bibfield  {author} {\bibinfo {author} {\bibfnamefont {F.~M.}\ \bibnamefont
  {Harper}}\ and\ \bibinfo {author} {\bibfnamefont {J.~A.}\ \bibnamefont
  {Konstan}},\ }\href@noop {} {\bibfield  {journal} {\bibinfo  {journal} {ACM
  Transactions on Interactive Intelligent Systems (TiiS)}\ }\textbf {\bibinfo
  {volume} {5}},\ \bibinfo {pages} {19} (\bibinfo {year} {2016})}\BibitemShut
  {NoStop}%
\bibitem [{\citenamefont {Ziegler}\ \emph {et~al.}(2005)\citenamefont
  {Ziegler}, \citenamefont {McNee}, \citenamefont {Konstan},\ and\
  \citenamefont {Lausen}}]{ziegler2005improving}%
  \BibitemOpen
  \bibfield  {author} {\bibinfo {author} {\bibfnamefont {C.-N.}\ \bibnamefont
  {Ziegler}}, \bibinfo {author} {\bibfnamefont {S.~M.}\ \bibnamefont {McNee}},
  \bibinfo {author} {\bibfnamefont {J.~A.}\ \bibnamefont {Konstan}}, \ and\
  \bibinfo {author} {\bibfnamefont {G.}~\bibnamefont {Lausen}},\ }in\
  \href@noop {} {\emph {\bibinfo {booktitle} {Proceedings of the 14th
  international conference on World Wide Web}}}\ (\bibinfo {organization}
  {ACM},\ \bibinfo {year} {2005})\ pp.\ \bibinfo {pages} {22--32}\BibitemShut
  {NoStop}%
\bibitem [{\citenamefont {McAuley}\ and\ \citenamefont
  {Leskovec}(2013)}]{mcauley2013amateurs}%
  \BibitemOpen
  \bibfield  {author} {\bibinfo {author} {\bibfnamefont {J.~J.}\ \bibnamefont
  {McAuley}}\ and\ \bibinfo {author} {\bibfnamefont {J.}~\bibnamefont
  {Leskovec}},\ }in\ \href@noop {} {\emph {\bibinfo {booktitle} {Proceedings of
  the 22nd international conference on World Wide Web}}}\ (\bibinfo
  {organization} {ACM},\ \bibinfo {year} {2013})\ pp.\ \bibinfo {pages}
  {897--908}\BibitemShut {NoStop}%
\bibitem [{\citenamefont {Bertin-Mahieux}\ \emph {et~al.}(2011)\citenamefont
  {Bertin-Mahieux}, \citenamefont {Ellis}, \citenamefont {Whitman},\ and\
  \citenamefont {Lamere}}]{Bertin-Mahieux2011}%
  \BibitemOpen
  \bibfield  {author} {\bibinfo {author} {\bibfnamefont {T.}~\bibnamefont
  {Bertin-Mahieux}}, \bibinfo {author} {\bibfnamefont {D.~P.}\ \bibnamefont
  {Ellis}}, \bibinfo {author} {\bibfnamefont {B.}~\bibnamefont {Whitman}}, \
  and\ \bibinfo {author} {\bibfnamefont {P.}~\bibnamefont {Lamere}},\ }in\
  \href@noop {} {\emph {\bibinfo {booktitle} {{Proceedings of the 12th
  International Conference on Music Information Retrieval ({ISMIR} 2011)}}}}\
  (\bibinfo {year} {2011})\BibitemShut {NoStop}%
\bibitem [{fcg()}]{fcgi}%
  \BibitemOpen
  \href@noop {} {\enquote {\bibinfo {title} {Finnish grid and cloud
  infrastructure},}\ }\bibinfo {note}
  {Urn:nbn:fi:research-infras-2016072533}\BibitemShut {NoStop}%
\end{thebibliography}%

\end{document}